\newcommand{\DFS}{{\sf{DFS-Indexing}}}
\newcommand{\BICON}{{\sf{Bi-Connectivity-Indexing}}}
\newcommand{\EDGE}{{\sf{2-Edge-Connectivity-Indexing}}}
\newcommand{\STRONG}{{\sf{Strong-Connectivity-Indexing}}}
\newcommand{\SHORT}{{\sf{Shortest Path-Indexing}}}
\newcommand{\UCON}{{\sf{Undirected-Connectivity-Indexing}}}
\begin{document}
\title{Indexing Graph Search Trees and Applications \protect\footnote{This work was partially supported by JST CREST Grant Number JPMJCR1402, Japan.}}
\author{Sankardeep Chakraborty\inst{1}, Kunihiko Sadakane\inst{2}}

\authorrunning{Chakraborty and Sadakane}

\institute{RIKEN Center for Advanced Intelligence Project, Tokyo, Japan\\
\email {sankar.chakraborty@riken.jp}\\
\and 
The University of Tokyo, Tokyo, Japan,\\
\email {sada@mist.i.u-tokyo.ac.jp}\\
}

 \maketitle

\begin{abstract}
We consider the problem of compactly representing the Depth First Search (DFS) tree of a given undirected or directed graph having $n$ vertices and $m$ edges while supporting various DFS related queries efficiently in the RAM with logarithmic word size.
We study this problem in two well-known models: {\it indexing} and {\it encoding} models. While most of these queries can be supported easily in constant time using $O(n \lg n)$ bits\footnote{We use $\lg$ to denote logarithm to the base $2$.} of extra space, our goal here is, more specifically, to beat this trivial $O(n \lg n)$ bit space bound, yet not compromise too much on the running time of these queries. In the {\it indexing} model, the space bound of our solution involves the quantity $m$, hence, we obtain different bounds for sparse and dense graphs respectively. In the {\it encoding} model, we first give a space lower bound, followed by an almost optimal data structure with extremely fast query time. Central to our algorithm is a partitioning of the DFS tree into connected subtrees, and a compact way to store these connections. Finally, we also apply these techniques to compactly index the shortest path structure, biconnectivity structures among others.
\end{abstract}

\section{Introduction}
Depth First Search (DFS) is a very well-known method for visiting the vertices and edges of a directed or undirected graph. DFS differs from other ways of traversing the graph such as Breadth First Search (BFS) by the following DFS protocol: Whenever two or more vertices were discovered by the search method and have unexplored incident (out)edges, an (out)edge incident on the most recently discovered such vertex is explored first. This DFS traversal produces a rooted spanning tree (forest), called DFS tree (forest) along with assigning an index to every vertex $v$ i.e., the time vertex $v$ is discovered for the first time during DFS. We call it depth-first-index (DFI($v$)). Let $G=(V,E)$ be a graph on $n=|V|$ vertices and $m=|E|$ edges where $V =\{v_1, v_2, \cdots, v_n\}$. It takes $O(m+n)$ time to perform a DFS traversal of $G$ and to generate its DFS tree (forest) with DFIs of all the vertices. The DFS rule confers a number of structural properties on the resulting graph traversal that cause DFS to have a large number of applications. These properties are captured in the DFS tree (forest), and can be used crucially to design efficient algorithms for many basic and fundamental algorithmic graph problems, namely, biconnectivity~\cite{Tarjan72}, $2$-edge connectivity~\cite{Tarjan74}, strongly connected components~\cite{Tarjan72},  topological sorting~\cite{Tarjan72}, dominators~\cite{Tarjan741}, {\it st}-numbering~\cite{EvenT76} and planarity testing~\cite{HopcroftT74} among many others.

There are two versions of DFS studied in the literature. In the lexicographically smallest DFS or lex-DFS problem, when DFS looks for an unvisited vertex to visit in an adjacency list, it picks the “first” unvisited vertex where the “first” is with respect to the appearance order in the adjacency list. The resulting DFS tree will be unique. In contrast to lex-DFS, an algorithm that outputs some DFS numbering of a given graph, treats an adjacency list as a set, ignoring the order of appearance of vertices in it, and outputs a vertex ordering $Q$ such that there exists some adjacency ordering $R$ such that $Q$ is the DFS numbering with respect to $R$. We say that such a DFS algorithm performs general-DFS. In this work, we focus only on lex-DFS, thus, given a source vertex, the DFS tree is always unique. Given the lex-DFS tree, the non-tree edges of a given directed graph can be classified into four categories as follows. An edge directed from a vertex to its ancestor in the tree is called a back edge. Similarly, an edge directed from a vertex to its descendant in the tree is called a forward edge. Further, an edge directed from right to left in the DFS tree is called a cross edge. The remaining edges directed from left to right in the tree are called anti-cross edges. In the undirected graphs, there are no cross edges. Note that, we can store the complete DFS tree explicitly using $O(n \lg n)$ bits by storing pointers between nodes. In what follows, we formally define the problem which we call the \DFS\ problem.

\begin{center}
\fbox{\begin{minipage}{11cm}
\noindent{\DFS\ problem}\\
\noindent {\em Input}: A directed or undirected graph $G=(V,E)$ where $|V|=n$, $|E|=m$, and a source vertex $v_s$, preprocess $G$ and answer the following queries with respect to the DFS tree $T$ rooted at $v_s$:
%\noindent{\em Queries}: 
\begin{enumerate}
\item Given any pair of vertices $v_i$ and $v_j$,
\begin{enumerate}
\item Who is visited first in the DFS traversal of $G$?
\item Is $v_i$ an ancestor of $v_j$ in $T$?
\end{enumerate}
\item Given $v_i$, 
\begin{enumerate}
\item Return the parent of $v_i$ in $T$.
\item Return the number of children (if any) of $v_i$ in $T$.
\item Enumerate all the children (if any) of $v_i$ in $T$.
\item Return the DFI of $v_i$.
\end{enumerate}
\item Enumerate the order in which vertices of $G$ are visited in the DFS.
\item Given $1\leq i \leq n$, return the vertex with DFI $i$.
\end{enumerate}
\end{minipage}}
\end{center} 
%\medskip

We study the \DFS\ problem in two well-known models: the {\it indexing} and {\it encoding} models~\cite{Navarro}. In the indexing model, we wish to build an index {\it ind} after preprocessing the input graph $G$ such that queries can be answered using both {\it ind} and $G$ whereas in the encoding model, we seek to build a data structure {\it encod} after preprocessing the input graph $G$ such that queries have to be answered using {\it encod} only. Typically the parameters of interest are (i) query time, (ii) space consumed (in bits) by {\it ind} and {\it encod} resp. and (iii) the preprocessing time and space. We address all these issues in our paper for the \DFS\ problem, assuming our computational model is a Random-Access-Machine with constant time operations on $O(\lg n)$-bit words. In both models, it is not hard to see that using $O(n \lg n)$ bits, we can answer all the queries of the \DFS\ problem in the optimal $O(1)$ time except the query of 3 which takes $O(n)$ time. 
%One way to achieve such a result is to store the DFIs of all the vertices explicitly in an array along with storing the DFS tree explicitly by keeping parent and child pointers, thus consuming $O(n \lg n)$ bits. 
Our main objective here is to beat this trivial $O(n \lg n)$ bit space bound without compromising too much on the query time. 

The motivation for studying this question mainly stems from the rise of the ``big data'' phenomenon and its implications. To illustrate, the rate at which we store data is increasing even faster than the speed and capacity of computing hardware. Thus, if we want to use the stored data efficiently, we need to represent it in sophisticated ways. Many applications dealing with huge data structures can benefit from keeping them in compressed form. Compression has many advantages: it can allow a representation to fit in main memory rather than swapping out to disk, and it improves cache performance since it allows more data to fit into the cache. However, such a data structure is only handy if it allows the application to perform fast queries to the data, and this is the direction we want to explore for the DFS tree. More specifically, we are interested in representing the DFS tree of a given graph compactly while supporting all the queries mentioned above efficiently.

\subsection{Representation of the Input Graph}
%{\bf Graph Representation.} 
We assume that the input graphs $G=(V,E)$ are represented using the {\it adjacency array} format, i.e., $G$ is given by an array of length $|V|$ where the $i$-th entry stores a pointer to an array that stores all the neighbors of the $i$-th vertex. For the directed graphs, we assume that the input representation has both in/out adjacency array for all the vertices i.e., for directed graphs, every vertex $v$ has access to two arrays, one array is for all the in-neighbors of $v$ and the other array is for all the out-neighbors of $v$. This form of input graph representation has now become somewhat standard and was recently used in plenty of other works~\cite{Banerjee2018,Cha_thesis,CTW,Chakraborty00S18,ChakrabortyRS17,ChakrabortyS19}. Throughout this paper, we call a graph sparse when $m=O(n)$, and dense otherwise (i.e., $m=\omega(n)$).

\subsection{Our Main Results and Organization of the Paper}
%{\bf Our Main Results and Organization of the Paper.} 
%Due to the lack of space, we mention the preliminary results that will be used throughout the paper in Appendix~\ref{prelim}.
We start by mentioning some preliminary results that will be used throughout the paper in Section~\ref{prelim}. Section~\ref{main_algo} contains the description of our main index for solving the \DFS\ problem in the indexing model.  Our main results here can be summarized as follows,

%We start by mentioning some preliminary results that will be used throughout the paper in Section~\ref{prelim}. This is followed by, in Section~\ref{main_algo}, the description of our index for solving the \DFS\ problem in the indexing model.  Our main results in this section can be compactly summarized as follows,

\begin{theorem}\label{sparsecase}
In the indexing model, given any sparse (dense resp.) undirected or directed graph $G$, there exists an $O(m+n)$ time and $O(n \lg n)$ bits preprocessing algorithm which outputs a data structure of size $O(n)$ ($O(n \lg (m/n))$ resp.) bits, using which the queries 1(a), 1(b), 2(d) and 4 can be reported in $O(\lg n)$ time, 2(a) and 2(b) 
%can be answered 
in $O(1)$ time, 2(c) in time proportional to the number of solutions, and finally 3 can be solved in $O(n)$ time resp. for the \DFS\ problem.
\end{theorem}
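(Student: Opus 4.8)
\emph{Proof proposal.} The plan is to run lex-DFS once in $O(m+n)$ time and $O(n\lg n)$ bits of workspace, record the DFS tree $T$ and the array of all DFIs, and then keep only a carefully chosen $O(n)$-bit (resp.\ $O(n\lg(m/n))$-bit) skeleton, discarding everything else. The skeleton has three layers. (1) A succinct encoding of $T$ with its nodes numbered in preorder $=$ DFI order, where the children of each node are ordered exactly as their tree edges appear in the adjacency array; this is built from the connected-subtree partition advertised in the abstract --- cut $T$ into $\Theta(n/\lg n)$ connected ``mini-trees'' of $O(\lg n)$ nodes (the children of a very-high-degree node are split across several mini-trees sharing that node as root), store each mini-tree's shape in $O(\lg n)$ bits, and store the $\Theta(n/\lg n)$ connections together with the DFI and subtree size of every mini-tree root, for $O(n)$ bits in total. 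On this layer every purely DFI-indexed operation --- parent, $k$-th child, number of children, subtree size, ancestry test, $i$-th preorder node, output-sensitive listing of children --- costs $O(1)$. (2) Per-vertex gadgets relating names to structure: for every non-root $v$, the position of the edge to $p(v)$ inside $v$'s own (in-)adjacency array (cost $\sum_v\lceil\lg\deg v\rceil$, which is $O(n)$ for sparse and $O(n\lg(m/n))$ for dense graphs by concavity of $\lg$), giving the parent, hence one step up $T$, in $O(1)$; and, for every $v$, a compressed rank/select bitmap over $v$'s (out-)adjacency array marking the tree edges to its children, plus a second bitmap of length ``number of tree-children of $v$'' marking which of those children lie in $v$'s own mini-tree --- the total size of all these bitmaps being $O(n)$ / $O(n\lg(m/n))$ by a similar concavity / relative-entropy estimate. (3) For the ``bushy'' nodes whose children spread over many mini-trees that share them as a common root, a permutation-inversion shortcut structure (in the style of Munro--Raman--Rao, sampling one in every $\Theta(\lg n)$ elements of each cycle) over the relevant adjacency-array permutation, so that a child can recover its position in such a parent's list in $O(\lg n)$ time using $O(n)$ extra bits.

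The queries are then answered as follows. Parent (2(a)) is one lookup via layer~(2); the number of children (2(b)) is a rank on the tree-edge bitmap; listing the children (2(c)) is select on that bitmap; and enumerating the whole DFS order (3) is a lockstep traversal of the succinct tree of layer~(1) and of the actual vertices via layer~(2), in $O(n)$ time. For $\mathrm{DFI}(v)$ (query 2(d)): climb from $v$ by repeated parent lookups --- at most $O(\lg n)$ steps, since a mini-tree has $O(\lg n)$ nodes, recognising mini-tree roots by an $n$-bit marker --- to the root $r$ of $v$'s mini-tree, recording the name-path $r=w_t,\dots,w_0=v$; read $\mathrm{DFI}(r)$ from layer~(1); then compute $\mathrm{DFI}(w_{t-1}),\dots,\mathrm{DFI}(w_0)$ in turn, where at each step we identify $w_{i-1}$ among the (few, all-internal-to-the-mini-tree) children of $w_i$ --- enumerating them via the two bitmaps of layer~(2), or via the shortcut structure of layer~(3) when $w_i$ is bushy --- and read off its DFI as the DFI of the corresponding child of node $\mathrm{DFI}(w_i)$ in the succinct tree. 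Since the enumerations along the path charge only $O(1)$ per node of $v$'s mini-tree, the whole computation is $O(\lg n)$. Query~4 (vertex with DFI $i$) is the mirror image: climb the succinct tree from node $i$ to its nearest mini-tree-root ancestor, read that root's name, and walk down $T$ along the now-known child indices using select on the tree-edge bitmaps, again $O(\lg n)$. Finally 1(a) and 1(b) reduce to one or two DFI computations and, for ancestry, a single subtree-range test in the succinct tree.

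The step I expect to be the main obstacle is exactly layer~(3), and more broadly proving that everything fits in $O(n)$ bits for sparse graphs (and $O(n\lg(m/n))$ for dense ones): naively storing each vertex's position in its parent's adjacency list costs $\sum_v\lg\deg(p(v))$, which can be $\omega(n)$ --- already for a path of small ``brooms'' or for a star with adversarial vertex names --- so one must (i) scan the parent's list directly whenever its degree is $O(\lg n)$, (ii) for internal mini-tree edges, charge such scans to the $O(\lg n)$ nodes of the mini-tree rather than to adjacency-list length, and (iii) for genuinely bushy nodes, replace storage by permutation inversion. Getting all three mechanisms to coexist, and verifying the $O(n\lg(m/n))$ bound in the dense regime through the concavity / relative-entropy inequalities, is the technically delicate part; once the data structure is in place, correctness of all the queries is routine from the standard preorder characterisation of DFS trees.
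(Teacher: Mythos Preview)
Your high-level plan---run DFS once, keep a tree-covering of $T$ into $\Theta(n/\lg n)$ minitrees of size $O(\lg n)$, store $O(\lg n)$ bits of boundary information per minitree, and translate between vertex names and DFIs by walking inside a single minitree---is exactly the paper's plan. The implementation, however, diverges, and the ``main obstacle'' you anticipate is an artifact of your particular route; the paper sidesteps it entirely.

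Instead of your per-vertex gadgets and the extra permutation-inversion layer, the paper keeps three \emph{global} bitvectors of length $n+2m$: the unary degree sequence $D$ (the $i$-th block of $0$s has length $\deg v_i$), and two bitvectors $E,P$ aligned with $D$ that mark, for each tree edge, its position in the parent's and in the child's adjacency array respectively. Each of $D,E,P$ has $\Theta(n)$ ones in a universe of size $O(m)$, so Elias--Fano compression gives $O(n\lg(m/n))$ bits with $O(1)$-time $select_1$; this is the whole space argument, with no concavity bookkeeping and no bushy-node case. Parent($v$) and the $j$-th tree child of $v$ are then single $select_1$ formulas.

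The key point you are missing is how DFI($v$) is obtained in $O(\lg n)$ time without ever asking ``which child am I in my parent's list''. The paper \emph{reconstructs the entire minitree $M$ containing $v$} (their Lemma~\ref{reconstruct}): from $v$, DFS down to the rightmost leaf of $M$ (the single outgoing non-root edge of $M$ is marked, so you never leave $M$); climb by parent pointers to the last-in-$M$ child of the root; look up in the per-minitree record the root, the position of its first-in-$M$ child, the DFI of that child, and the subtree size at the one boundary node; then DFS again from the root's first-in-$M$ child, enumerating children by the $O(1)$ $j$-th-child primitive. Because every non-root, non-boundary node of $M$ has all its children inside $M$, the total number of child enumerations is bounded by $|M|=O(\lg n)$, even if some node has huge degree in $G$. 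This makes your layer~(3) unnecessary. For query~4 the paper does not walk the succinct tree as you propose; it observes that the DFIs in each minitree form at most two intervals, stores these $O(n/\lg n)$ intervals in an interval tree, locates the right minitree in $O(\lg n)$ time, and then reuses the same reconstruction lemma.

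In short: your proposal is workable in spirit and shares the decomposition idea, but you are engineering around a difficulty (child-rank in a high-degree parent) that the paper's global-bitvector encoding plus whole-minitree reconstruction simply never encounters.
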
 

We want to emphasize that obtaining better results for sparse graphs is not only interesting from theoretical perspective
but also from practical point of view as these graphs do appear very frequently in most of the realistic network scenario in real world applications, e.g., Road networks and the Internet.
%Studying sparse graphs is very important not only from theoretical perspective
%but also from practical point of view. These graphs appear very frequently in most of the realistic network scenario in real world applications. Road networks and the Internet are two most popular examples to such scenarios. For the dense graphs, we obtain the following result.
%
%\begin{theorem}\label{densecase}
%Given any dense undirected or directed graph $G$, there exists an $O(m+n)$ time and $O(n \lg n)$ bits preprocessing algorithm which outputs a data structure of size $O(n \lg (m/n))$ bits, using which the queries can be answered in same time as in Theorem~\ref{sparsecase}.
%\end{theorem}

In Section~\ref{encoding}, we provide the detailed proof of our index in the encoding model. This contains a space lower bound for any index for the \DFS\ problem, followed by an index whose size asymptotically matches the lower bound and has efficient query time. We summarize our main results~below.
%we provide a lower bound for the size of any index for the \DFS\ problem in the encoding model, and this is followed by an index whose size asymptotically matches the lower bound and has efficient query time in Appendix~\ref{encoding}. We summarize our main results concerning the encoding model below.

\begin{theorem}\label{encoding_lower_bound}
In the encoding model, the size of any data structure for the \DFS\ problem must be $\Omega(n \lg n)$ bits. On the other hand, given any (un)directed graph, there exists an $O(m+n)$ time and $O(n \lg n)$ bits preprocessing scheme that outputs an index of size $(1+\epsilon)n \lg n+2n+o(n)$ bits (for any constant $\epsilon > 0$), using which the queries 1(a), 1(b), 2(a), 2(b), 2(d) can be reported in $O(1)$ time, 2(c) in time proportional to the number of solutions, 3 in $O(n/\epsilon)$ time, and finally 4 in $O(1/\epsilon)$ time resp. for the \DFS\ problem in this setting.
\end{theorem}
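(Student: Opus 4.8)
I would prove the two halves of the statement independently. For the $\Omega(n\lg n)$ lower bound, the plan is to exhibit, for each $n$ and each fixed source $v_s$, a family of $(n-1)!$ inputs that any encoding must keep pairwise distinct. Given a sequence $u_1,\dots,u_{n-1}$ of the non-source vertices, let $G$ be the (directed or undirected) path $v_s - u_1 - \cdots - u_{n-1}$; the lex-DFS from $v_s$ on $G$ is forced, its DFS tree is exactly this path, and $\mathrm{DFI}(u_k)=k$. Distinct sequences give distinct edge sets, hence $(n-1)!$ distinct inputs, and the answers to the $n$ queries of type 2(d) alone recover the sequence, so these inputs require pairwise different codewords. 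Therefore the worst-case codeword length is at least $\lceil\lg((n-1)!)\rceil = (n-1)\lg(n-1) - \Theta(n) = \Omega(n\lg n)$ bits; a count of all ordered rooted forests realizable as lex-DFS forests gives the same order of magnitude, so $(1+o(1))\,n\lg n$ is also the right target for the upper bound.

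For the upper bound I would store two ingredients. The first is the permutation $\pi$ sending each vertex $v_i$ to $\mathrm{DFI}(v_i)$, kept in the succinct permutation representation recalled in Section~\ref{prelim}: for any constant $\epsilon>0$ it uses $(1+\epsilon)\,n\lg n$ bits and computes $\pi(\cdot)$ in $O(1)$ time and $\pi^{-1}(\cdot)$ in $O(1/\epsilon)$ time. The second is the shape of the DFS tree (forest) $T$ with its nodes listed in discovery order, stored as a balanced-parenthesis succinct ordinal tree in $2n+o(n)$ bits, which supports in $O(1)$ time the translations node $\leftrightarrow$ preorder rank together with parent, number of children, $i$-th child, subtree size, and the ancestor test. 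The total space is $(1+\epsilon)\,n\lg n+2n+o(n)$ bits, and both structures are produced during a single lex-DFS run, so preprocessing takes $O(m+n)$ time and $O(n\lg n)$ bits of working space.

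Answering the queries then only requires remembering that $T$ lives in ``DFI space'' while $\pi$ and $\pi^{-1}$ convert between DFIs and vertex names. Query 1(a) compares $\pi(v_i)$ with $\pi(v_j)$, 2(d) returns $\pi(v_i)$, 1(b) runs the ancestor test in $T$ on the nodes of preorder rank $\pi(v_i)$ and $\pi(v_j)$, and 2(b) reads the child-count of the node of preorder rank $\pi(v_i)$ --- all in $O(1)$. Query 2(a) takes the $T$-parent of the node of preorder rank $\pi(v_i)$ and maps its preorder rank back through $\pi^{-1}$, in $O(1/\epsilon)$ time; 2(c) lists the children's preorder ranks in $T$ (each in $O(1)$) and applies $\pi^{-1}$ to each, running in time proportional to the number of children; query 4 is one $\pi^{-1}$ evaluation, in $O(1/\epsilon)$ time; and query 3 outputs $\pi^{-1}(1),\dots,\pi^{-1}(n)$, in $O(n/\epsilon)$ time.

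The step I expect to need the most care is making the space bound tight rather than merely $O(n\lg n)$. A plain array for $\pi$ would actually be cheaper than $(1+\epsilon)\,n\lg n$ bits, but it would make $\pi^{-1}$ --- and hence queries 3, 4, 2(a) and 2(c) --- too slow, which is exactly why the succinct permutation structure is essential; one then has to verify that its $O(1)$/$O(1/\epsilon)$ interface composes without overhead with the $O(1)$-time ordinal-tree operations, handle the forest case (a disconnected $G$, or a digraph in which $v_s$ does not reach every vertex), and check that the $o(n)$ terms of the two succinct structures combine as claimed. With those checks in place the upper bound matches the lower bound up to the additive $2n+o(n)$.
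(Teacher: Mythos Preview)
Your proposal is correct and, for the upper bound, essentially identical to the paper's: store the permutation $\pi:v_i\mapsto\mathrm{DFI}(v_i)$ via Theorem~\ref{perm} in $(1+\epsilon)n\lg n$ bits and the ordinal DFS tree via Theorem~\ref{succ_tree} in $2n+o(n)$ bits, then answer each query by composing $\pi$/$\pi^{-1}$ with the tree primitives. You are in fact more careful than the paper about the back-conversion through $\pi^{-1}$: the paper writes the answer to 2(a) simply as $LA(v_i,\mathrm{depth}(v_i)-1)$ and calls it $O(1)$, implicitly treating the tree node as a vertex label, whereas you correctly note that an extra $\pi^{-1}$ is needed and quote $O(1/\epsilon)$. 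Since $\epsilon$ is a fixed constant this does not contradict the theorem, but your accounting is the more honest one.

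The lower bound argument is where you genuinely diverge. The paper argues informally that any encoding must retain the vertex-label permutation (hence $\Omega(n\lg n)$ bits) and separately proves an $\Omega(n\lg(m/n))$ bound on the tree topology via a star-of-stars graph; the maximum of the two gives $\Omega(n\lg n)$. Your path-graph family---one input per permutation of the $n-1$ non-source vertices, distinguished already by the answers to the $n$ queries of type~2(d)---is a cleaner, fully explicit counting argument that reaches $\Omega(n\lg n)$ directly and makes the information-theoretic step transparent. The paper's route has the side benefit of isolating an $\Omega(n\lg(m/n))$ contribution from the tree shape alone, which is interesting in its own right but not needed for Theorem~\ref{encoding_lower_bound}.
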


%\begin{theorem}\label{encoding_theorem}
%In the encoding model, given any undirected or directed graph $G$, there exists an $O(m+n)$ time and $O(n \lg n)$ bits preprocessing algorithm which outputs an index of size $(1+\epsilon)n \lg n+2n+o(n)$ bits (for any constant $\epsilon > 0$), using which the queries 1(a), 1(b), 2(a), 2(b), 2(d) can be reported in $O(1)$ time, 2(c) in time proportional to the number of solutions, 3 in $O(n/\epsilon)$ time, and finally 4 in $O(1/\epsilon)$ time resp. for the \DFS\ problem.
%\end{theorem}

Building on all these aforementioned results, we also show a host of applications of our techniques in designing indices for other fundamental graph problems in Appendix~\ref{appendix}. 
%We provide the details about these results in the full version of our paper.
% in Appendix~\ref{appendix}.
%Then we provide some applications of our method in Section~\ref{app}. 
Finally, we conclude in Section~\ref{conclude} with some open problems and possible future directions to explore further.

{\bf Remark.} At this point we want to emphasize that our results are more general, i.e., they can be extended to store any arbitrary labeled tree (arising from some underlying graph) along with the mechanism for fast querying. This method is very useful as many graph algorithms (like shortest path, minimum spanning tree, biconnectivity etc) induce a tree structure which is used subsequently during the execution of the algorithm. Hence, we can use our technique to store and query those trees compactly as well as efficiently. Thus, we also believe that our algorithm may find many other potential interesting applications. However, we chose to provide all the details in terms of DFS as DFS is very widely popular graph traversal technique and is used as the backbone for multiple fundamental algorithms, yet there is no explicit indexing scheme for storing DFS tree compactly. In Appendix~\ref{app}, 
%the full version of our paper,
%Appendix~\ref{app}, 
we show how one can extend these techniques to design indexing schemes for a variety of other classical and fundamental graph problems.

\subsection{Related Works}
%{\bf Related Work.} 
There already exists a large body of work concerning compactly representing various specific classes of graphs, for example planar, constant genus graphs etc~\cite{Acan,BlandfordBK03,FerresSG0N17,MunroN16,MunroR01,Navarro,YamanakaN10}. All of these works are able to store an $n$-vertex unlabeled planar graph in $O(n)$ bits, and some of them even allow for $O(1)$-time neighbor queries. Generally what is meant by unlabeled is that the algorithm is free to choose an ordering on the vertices (integer labels from $1$ to $n$). Our setting here is slightly different as we work with graphs whose vertices are labeled, and matches closely with~\cite{BarbayAHM07}. Also we want to support more complex queries whereas the previous works only focused on adjacency queries mostly. Even though DFS being such a widely known method, and having many applications, to the best of our knowledge, we are not aware of any previous work focusing on compactly representing the DFS tree with efficient query support.

\section{Preliminaries} \label{prelim}
{\bf Rank-Select.} We make use of the following theorem:
\begin{theorem}\cite{Clark96}
 \label{staticrs}
We can store a bitstring $B$ of length $n$ with additional $o(n)$ bits such that rank and select operations (defined below) can be supported in $O(1)$ time. Such a structure can also be constructed from the given bitstring in $O(n)$ time and space.
\end{theorem}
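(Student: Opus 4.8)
Recall the definitions: for a bit $b\in\{0,1\}$, $\mathrm{rank}_b(B,i)$ is the number of occurrences of $b$ in the prefix $B[1..i]$, and $\mathrm{select}_b(B,j)$ is the position of the $j$-th occurrence of $b$ in $B$ (undefined if there are fewer than $j$ of them). Since $\mathrm{rank}_0(B,i)=i-\mathrm{rank}_1(B,i)$, it suffices to handle $\mathrm{rank}_1$; likewise $\mathrm{select}_0$ and $\mathrm{select}_1$ are handled by two symmetric structures, so below I describe only the solutions for $\mathrm{rank}_1$ and $\mathrm{select}_1$. The guiding principle throughout is the ``four Russians'' table-lookup trick: any rank/select query confined to a window of at most $\tfrac12\lg n$ consecutive bits can be answered in $O(1)$ time using one precomputed universal table of $2^{(\lg n)/2}\cdot\mathrm{poly}(\lg n)=o(n)$ bits, so the real work is to reduce a general query to such a window using only $o(n)$ bits of auxiliary directories.

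\textbf{Rank.} The plan is a two-level directory. Cut $B$ into \emph{superblocks} of $s=\lceil\lg n\rceil^2$ bits, and each superblock into \emph{blocks} of $t=\lceil\tfrac12\lg n\rceil$ bits. In an array $R_1$ store, for each superblock, the number of $1$s in $B$ strictly before it: $O(n/s)$ entries of $O(\lg n)$ bits, i.e.\ $O(n/\lg n)=o(n)$ bits. In an array $R_2$ store, for each block, the number of $1$s since the start of its superblock: each value is at most $s$, so it fits in $O(\lg\lg n)$ bits, and with $O(n/t)$ entries this is $O(n\lg\lg n/\lg n)=o(n)$ bits. To compute $\mathrm{rank}_1(B,i)$, locate the superblock and block containing $i$, add the corresponding $R_1$ and $R_2$ values, and add the number of $1$s in the prefix of the final (length $<t\le\tfrac12\lg n$) block via one table lookup; all $O(1)$.

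\textbf{Select.} This part needs a genuine hierarchical decomposition and is the technical heart of~\cite{Clark96}; I expect it to be the main obstacle, because keeping every level's directory within $o(n)$ bits while the recursion bottoms out at a $\tfrac12\lg n$-bit window demands a careful choice of thresholds. Mark every $r=\lceil\lg n\rceil^2$-th $1$ of $B$ and record its position explicitly, in $O((n/r)\lg n)=o(n)$ bits; this partitions the $1$s into \emph{groups} of $r$ consecutive $1$s, and a group spanning a bit-range of width $w$ is \emph{sparse} if $w\ge r^2$ and \emph{dense} otherwise. For each sparse group store the answers to all $r$ of its $\mathrm{select}_1$ queries verbatim: sparse groups are pairwise disjoint and each occupies $\ge r^2$ positions, so there are $O(n/r^2)$ of them, costing $O((n/r^2)\cdot r\cdot\lg n)=O(n/\lg n)=o(n)$ bits. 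A dense group lives in a window of $<r^2=\mathrm{poly}(\lg n)$ bits, where positions relative to the group's start need only $O(\lg\lg n)$ bits; inside it repeat the construction one level down with $r'=\lceil\lg r\rceil^2=\Theta((\lg\lg n)^2)$, storing relative answers explicitly for sub-ranges of width $\ge (r')^2$ and, for the remaining sub-ranges of width $<(r')^2=\mathrm{poly}(\lg\lg n)=o(\lg n)$, answering via a universal table. A routine accounting shows both levels of marker arrays and both levels of explicit tables sum to $o(n)$ bits, and every $\mathrm{select}_1$ query descends at most two levels plus one table access, hence $O(1)$ time.

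\textbf{Construction.} Every directory array above is produced by a single left-to-right scan of $B$ maintaining running counts of $1$s, so all directories are built in $O(n)$ time; the universal tables have $o(n)$ entries, each filled in $\mathrm{poly}(\lg n)$ time, for $o(n)$ time in total; and at no point does the working space exceed $O(n)$ bits. Combining the rank structure with the two select structures gives $B$ together with $o(n)$ extra bits supporting $\mathrm{rank}_b$ and $\mathrm{select}_b$ in $O(1)$ time, constructed in $O(n)$ time and space, as claimed.
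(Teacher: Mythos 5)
Your construction is correct and is essentially the standard argument behind the cited result: Jacobson's two-level rank directories with a four-Russians lookup table, plus Clark's hierarchical sparse/dense decomposition for select, which is exactly what the paper relies on by citing \cite{Clark96} (the paper gives no proof of its own). The only detail left implicit—how to locate, in $O(1)$ time, the storage area for a given (sub)group when sparse and dense records have different sizes—is handled by the usual $o(n)$-bit pointer arrays and does not affect correctness.
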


For any $ a\in \{0,1\}$, the rank and select operations are defined as follows :
\begin{itemize}
 \item $rank_a(B,i)$ = the number of occurrences of $a$
 %$a\in \{0,1\}$ 
 in $B[1,i]$, for $1\leq i\leq n$;
 \item $select_a(B,i)$ = the position in $B$ of the $i$-th occurrence of $a$, for $1\leq i\leq n$.
% $a\in \{0,1\}$.
\end{itemize}

When the bitvector $B$ is sparse, the space overhead of $o(n)$ bits can be avoided by using the following theorem, which will also be used later in our paper. 
\begin{theorem}\cite{Navarro}
\label{sparse}
We can store a bitstring $B$ of length $n$ with $m$ $1$s using $m\lg(n/m)+O(m)$ bits such that $select_1(B,1)$ can be supported in $O(1)$ time, $select_0(B,1)$ in $O(\lg m)$ time, and both the rank queries ($rank_1(B,i)$ and $rank_0(B,i)$) can be supported in $O(\text{min}(\lg m,\lg n/m))$ time. Such a structure can also be constructed from $B$ in $O(n)$ time and space.
\end{theorem}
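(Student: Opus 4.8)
The plan is to use the \emph{Elias--Fano} representation of the sorted sequence $p_1 < p_2 < \dots < p_m$ of positions of the $1$s in $B$. Fix $w = \lceil \lg(n/m) \rceil$ and write each position as $p_i = h_i \cdot 2^{w} + \ell_i$ with $\ell_i \in [0, 2^{w})$; call $h_i$ the \emph{high part} and $\ell_i$ the \emph{low part}. Since the $p_i$ increase, the high parts $h_1 \le h_2 \le \dots \le h_m$ are nondecreasing and lie in $[0, \lceil n/2^w\rceil]$ (an interval of length $O(m)$), and the low parts inside any maximal run of equal high parts are \emph{strictly} increasing, hence distinct.

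Store the low parts verbatim in an array $L[1..m]$ of fixed-width $w$-bit fields, costing $m\lceil\lg(n/m)\rceil = m\lg(n/m) + O(m)$ bits. Store the high parts in a bitvector $H$ obtained by writing, for each value $v = 0,1,2,\dots$ in order, $c_v$ copies of $1$ (where $c_v = |\{\, i : h_i = v \,\}|$) followed by a single $0$. Then $H$ has exactly $m$ ones and at most $m+1$ zeros, so $|H| = O(m)$ bits; we equip $H$ with the constant-time rank/select structure of Theorem~\ref{staticrs}, which adds $o(|H|) = o(m)$ bits and is built in $O(|H|) = O(m)$ time. The total space is $m\lg(n/m) + O(m)$ bits, and constructing all of it from $B$ is a single left-to-right scan recovering the $p_i$ and emitting $L$ and $H$, i.e.\ $O(n)$ time and $O(n)$ space, as claimed.

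For $select_1(B,i)$: the high part of $p_i$ equals $select_1(H,i) - i$ (the $i$-th $1$ of $H$ is preceded by exactly that many $0$s, which is its high value), the low part is $L[i]$, and $p_i = (select_1(H,i)-i)\cdot 2^w + L[i]$, all in $O(1)$. For $rank_1(B,j)$, write $j = h\cdot 2^w + \ell$; the indices $i$ with $h_i = h$ form a contiguous block $[a,b]$ of $L$ located in $O(1)$ from $H$ via $select_0$ (the $h$-th and $(h{+}1)$-st $0$s of $H$ delimit the run of $1$s for value $h$, and applying $rank_1$ on $H$ at those positions gives $a-1$ and $b$), and then $rank_1(B,j) = (a-1) + |\{\, i \in [a,b] : L[i] \le \ell \,\}|$. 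Since $L[a..b]$ is sorted and consists of distinct $w$-bit low parts, its length is at most $\min(m, 2^w) = O(\min(m, n/m))$, so a binary search over it computes the last term in $O(\min(\lg m, \lg(n/m)))$ time; then $rank_0(B,j) = j - rank_1(B,j)$. Finally, for $select_0(B,i)$, note that the number of $0$s of $B$ strictly before $p_k$ is $p_k - k$, which is nondecreasing in $k$ and evaluable in $O(1)$ via $select_1$ on $H$; a binary search on $k \in \{0,\dots,m\}$ thus pins the gap containing the $i$-th $0$ and hence its position in $O(\lg m)$ time.

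I expect the routine part to be the high/low split and the unary encoding of $H$; the two points needing care are (i) invoking Theorem~\ref{staticrs} on $H$ rather than the sparse theorem itself (avoiding circularity) and checking its $o(\cdot)$ overhead is measured against $|H| = O(m)$, not $n$, so the $o(m)$ slack stays inside the $O(m)$ budget; and (ii) the observation that within-block low parts are distinct $w$-bit values, which bounds every block by $\min(m, 2^w)$ and is exactly what converts the naive $O(\lg m)$ binary search for rank into the stated $O(\min(\lg m, \lg(n/m)))$ bound.
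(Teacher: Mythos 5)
Your proof is correct, and it is essentially the standard Elias--Fano construction for very sparse bitvectors, which is exactly the argument behind the cited result (the paper itself only cites~\cite{Navarro} and gives no proof): high/low split with unary-coded high parts, constant-time select via rank/select on $H$, binary search within a bucket for rank, and binary search over the gaps $p_k-k$ for $select_0$. Your two cautionary points (building Theorem~\ref{staticrs} on $H$ so the $o(\cdot)$ overhead is measured in $|H|=O(m)$, and bounding bucket sizes by $\min(m,2^w)$ to get the $O(\min(\lg m,\lg(n/m)))$ rank time) are precisely the right ones, so nothing is missing.
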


{\bf Permutation.} We also use the following theorem:
\begin{theorem}\cite{MunroRRR12}
\label{perm}
A permutation $\pi$ of length $n$ can be represented using $(1+\epsilon)n \lg n$ bits so that $\pi(i)$ is answered in $O(1)$ time and $\pi^{-1}$ in time $O(1/\epsilon)$ for any constant $\epsilon > 0$. Such a representation can be constructed using $O(n)$ time and space.
\end{theorem}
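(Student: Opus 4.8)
\textbf{Proof proposal for Theorem~\ref{encoding_lower_bound}.}

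\emph{Lower bound.} The plan is an information-theoretic counting argument. First I would exhibit a family of graphs whose lex-DFS trees encode $\Omega(n\lg n)$ bits of distinct information. Consider a star-like construction: take the source $v_s$ together with $n-1$ other vertices, and make $G$ the complete graph (or a graph dense enough that the adjacency-array ordering at $v_s$ can be chosen arbitrarily). Because lex-DFS explores neighbors in adjacency-array order, the DFS traversal order of $G$ can be made to realize essentially any permutation of $\{v_1,\dots,v_n\}$ that starts at $v_s$ — in particular the DFI assignment (query 2(d) and query 4) ranges over $(n-1)!$ distinct possibilities as the adjacency arrays vary. Since in the encoding model every query must be answered from \emph{encod} alone, two graphs inducing different DFI permutations must receive different encodings; hence $|\text{encod}| \ge \lg((n-1)!) = \Omega(n\lg n)$ bits. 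I would phrase this cleanly as: the map "input graph $\mapsto$ (answers to query~4 for all $i$)" is onto a set of size $(n-1)!$, and \emph{encod} determines that map.

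\emph{Upper bound — data structure.} The encoding stores two objects. (1) The DFI assignment, viewed as a permutation $\pi$ on $\{1,\dots,n\}$ where $\pi(i)=\text{DFI}(v_i)$; store it with Theorem~\ref{perm} using $(1+\epsilon)n\lg n$ bits, giving $\pi(i)$ in $O(1)$ time (this answers 2(d) immediately) and $\pi^{-1}$ in $O(1/\epsilon)$ time (this answers query~4). (2) The \emph{shape} of the DFS tree, with nodes ordered by DFI: a node $u$ has a unique parent whose DFI is smaller, and it is a classical DFS fact that, reading vertices in DFI order, the parent/ancestor relation is captured by a balanced-parenthesis (BP) sequence of the tree in DFS preorder — $2n$ bits — augmented with the $o(n)$-bit structures of Theorem~\ref{staticrs} to support the standard navigation primitives (\textsf{findopen}, \textsf{findclose}, \textsf{enclose}, \textsf{rank}/\textsf{select} of parentheses, subtree size). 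Total: $(1+\epsilon)n\lg n + 2n + o(n)$ bits, matching the lower bound up to the $(1+\epsilon)$ factor and lower-order terms.

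\emph{Upper bound — answering queries.} Every query reduces to a constant number of $\pi$/$\pi^{-1}$ evaluations plus $O(1)$ BP-navigation operations. For 1(a) (who is visited first), compare $\pi(i)$ and $\pi(j)$ — $O(1)$. For 1(b) ($v_i$ ancestor of $v_j$), translate both to DFI values $a=\pi(i)$, $b=\pi(j)$, locate their opening parentheses via \textsf{select}, and test whether $j$'s parenthesis pair is nested inside $i$'s using \textsf{enclose}/subtree-range comparison — $O(1)$. For 2(a) (parent of $v_i$): from $a=\pi(i)$, BP gives the DFI of the parent in $O(1)$, then apply $\pi^{-1}$ — but here I must be careful, since $\pi^{-1}$ costs $O(1/\epsilon)$; the theorem statement only promises $O(1)$ for 2(a) and 2(b), so I would instead additionally store, or observe that, the parent's \emph{DFI} suffices to be returned if we are allowed to report by DFI, \emph{or} note that the BP navigation together with a select on the parenthesis string already yields the vertex identity via a rank; I would settle this by reporting parent/children through DFI-indexed access and using that query~4 itself is the $\pi^{-1}$ operation, so 2(a),(b) genuinely need only $\pi$ (applied to neighbors returned as DFIs) — the cleanest route is: represent all internal answers in DFI space, where BP gives parent, number of children (degree in BP), and child enumeration all in $O(1)$ or output-sensitive time, and convert to the original vertex label only when the \emph{output} is a vertex, which is exactly what query~4 is and is allowed its $O(1/\epsilon)$. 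For 2(c) (enumerate children), walk the children in BP in $O(1)$ per child. For query~3 (enumerate DFS visit order), output $\pi^{-1}(1),\pi^{-1}(2),\dots,\pi^{-1}(n)$ in $O(n/\epsilon)$ total. For query~4, one $\pi^{-1}$ call, $O(1/\epsilon)$.

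\emph{Preprocessing.} Run lex-DFS on the adjacency-array representation in $O(m+n)$ time to obtain DFIs and the tree; build $\pi$ via Theorem~\ref{perm} in $O(n)$ time/space; emit the BP string in DFS preorder in $O(n)$ time and attach the Theorem~\ref{staticrs} structures in $O(n)$ time/space; peak working space $O(n\lg n)$ bits for the DFS stack and intermediate arrays. The main obstacle I anticipate is the bookkeeping in the upper-bound query analysis: making sure every query meets its stated time bound given that $\pi^{-1}$ costs $\Theta(1/\epsilon)$ rather than $O(1)$ — this forces the discipline of doing all tree navigation in DFI coordinates and invoking $\pi^{-1}$ only for the final vertex-valued output, and verifying that the BP primitives of Theorem~\ref{staticrs} indeed deliver parent, child-count, ancestor-test, and child-enumeration in the claimed times. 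The lower bound, by contrast, is a short counting argument once the right dense-graph family is fixed.
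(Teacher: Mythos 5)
The statement you were asked to prove is Theorem~\ref{perm} itself --- the succinct permutation representation: a permutation $\pi$ of length $n$ stored in $(1+\epsilon)n\lg n$ bits so that $\pi(i)$ is answered in $O(1)$ time and $\pi^{-1}$ in $O(1/\epsilon)$ time, with $O(n)$-time construction. Your write-up instead sketches a proof of Theorem~\ref{encoding_lower_bound}, and inside that sketch you \emph{use} Theorem~\ref{perm} as a black box (``store it with Theorem~\ref{perm} using $(1+\epsilon)n\lg n$ bits\dots''). Consequently the target statement is never argued at all: nothing in your proposal explains where the $(1+\epsilon)$ factor comes from, why forward evaluation is $O(1)$ while inversion costs $O(1/\epsilon)$, or how the structure is built in linear time. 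That is the gap --- the entire content of the assigned statement is missing, not merely a step of it. (For context, the paper does not prove this theorem either; it imports it from the cited work of Munro, Raman, Raman and Rao as a preliminary, so there is no internal proof to match your sketch against --- but a blind proof attempt for this statement must supply the construction.)

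If you want to close the gap, the standard argument is short. Store $\pi$ verbatim in an array of $n\lceil\lg n\rceil$ bits, which answers $\pi(i)$ in $O(1)$ time. Decompose $\pi$ into its cycles, set $t=\lceil 1/\epsilon\rceil$, and in every cycle of length greater than $t$ mark every $t$-th element along the cycle; for each marked element store a shortcut pointer that jumps $t$ steps \emph{backwards} along its cycle. Record the marked positions in a bitvector of $n+o(n)$ bits with rank support (Theorem~\ref{staticrs}), so a marked element locates its shortcut in an array of at most $n/t$ pointers, i.e.\ at most $\epsilon n\lg n + o(n\lg n)$ extra bits. To compute $\pi^{-1}(i)$, walk forward from $i$ by repeated applications of $\pi$, taking the shortcut at the first marked element encountered; within $O(t)=O(1/\epsilon)$ steps you reach the $j$ with $\pi(j)=i$, because consecutive marked elements on a cycle are exactly $t$ apart. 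One pass over the cycles builds everything in $O(n)$ time and space. Separately, your discipline in the encoding-model sketch of doing all navigation in DFI coordinates and paying $O(1/\epsilon)$ only for vertex-valued outputs is sensible, but it presupposes precisely the theorem you were asked to establish.
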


{\bf Succinct Tree Representation.} We need following result from~\cite{FarzanM14}.
\begin{theorem}~\cite{FarzanM14}
\label{succ_tree}
There exists a data structure to succinctly encode an ordered tree with $n$ nodes using $2n+o(n)$ bits such that, given a node $v$, (a) child($v$,$i$): $i$-th child of $v$, (b) degree($v$): number of children of $v$, (c) depth($v$): depth of $v$, (d) $select_{pre}$($v$): position of $v$ in preorder, (e) $LA(v,i)$: ancestor of $v$ at level $i$ can be supported in $O(1)$ time among many others. Such a structure can also be constructed in $O(n)$ time and space.
\end{theorem}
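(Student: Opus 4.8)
\textbf{Proof proposal for Theorem~\ref{encoding_lower_bound}.}

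\emph{Lower bound.} The plan is a counting argument. I want to exhibit a family of graphs on $n$ vertices, all sharing the same adjacency-array input up to relabeling, such that the data structure alone must distinguish $2^{\Omega(n \lg n)}$ distinct answer-profiles. The cleanest choice: take $G$ to be a single path on $v_1,\dots,v_n$ but encode the path in the adjacency arrays in an arbitrary order, i.e.\ permute which vertex is called "the $i$-th vertex." For a path graph rooted at one endpoint, the lex-DFS tree is the path itself, and the DFI function is exactly the position along the path, which can be \emph{any} of the $n!$ permutations of $\{1,\dots,n\}$ depending on the input labeling. Since query 2(d) must return $\mathrm{DFI}(v_i)$ correctly using \emph{only} the encoding (the encoding model forbids consulting $G$), the encoding must determine this permutation completely, hence must take at least $\lg(n!) = \Omega(n \lg n)$ bits. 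This is the whole lower-bound argument; the only thing to be careful about is confirming that every permutation is realizable as a lex-DFS DFI on some adjacency-array input, which is immediate for the path.

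\emph{Upper bound --- construction.} Run lex-DFS from $v_s$ in $O(m+n)$ time to obtain the DFS forest $T$ and the DFI of every vertex. Store two objects. First, the DFI map viewed as a permutation $\pi$ with $\pi(i) = \mathrm{DFI}(v_i)$: encode it with Theorem~\ref{perm}, costing $(1+\epsilon)n\lg n$ bits and giving $\pi(i)$ in $O(1)$ time and $\pi^{-1}(j)$ in $O(1/\epsilon)$ time. Second, the \emph{shape} of $T$ as an ordered tree whose children at each node are listed in increasing DFI order; encode this with Theorem~\ref{succ_tree}, costing $2n+o(n)$ bits (adding a dummy root to merge a forest into a tree adds only $O(1)$). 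Crucially, in this ordered-tree encoding the preorder rank of a node equals its DFI, so the succinct-tree operations indexed by preorder position compose directly with $\pi$. The total is $(1+\epsilon)n\lg n + 2n + o(n)$ bits, and preprocessing is $O(m+n)$ time and $O(n\lg n)$ bits as claimed.

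\emph{Upper bound --- queries.} For a vertex $v_i$, its position in $T$ is the node with preorder rank $\pi(i)$, obtained by $select_{pre}$ in $O(1)$; conversely, from a node we recover its vertex label by reading its preorder rank $d$ and computing $\pi^{-1}(d)$ in $O(1/\epsilon)$. Query 2(d) is just $\pi(i)$, $O(1)$. Query 4 (vertex with DFI $j$) is $\pi^{-1}(j)$, $O(1/\epsilon)$. Query 1(a) compares $\pi(i)$ with $\pi(j)$, $O(1)$. Query 1(b), "is $v_i$ an ancestor of $v_j$," reduces to the standard preorder/subtree-size test on $T$: $v_i$ is an ancestor of $v_j$ iff $\mathrm{DFI}(v_i) \le \mathrm{DFI}(v_j) < \mathrm{DFI}(v_i) + \mathrm{size}(v_i)$, and subtree size is derivable from the operations in Theorem~\ref{succ_tree} (or one stores it implicitly); $O(1)$. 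Query 2(a) (parent) is the tree-parent of node $\pi(i)$, then map back via $\pi^{-1}$, $O(1) + O(1/\epsilon) = O(1/\epsilon)$ --- and since $\epsilon$ is a constant this is $O(1)$; $\mathrm{degree}$ gives 2(b) in $O(1)$; iterating $\mathrm{child}(v,1),\mathrm{child}(v,2),\dots$ and mapping each back gives 2(c) in time proportional to the number of children. Query 3 (enumerate DFS visit order) is: for $d = 1,\dots,n$ output $\pi^{-1}(d)$, costing $O(n/\epsilon)$.

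\emph{Main obstacle.} The delicate point is not any single query but the \emph{consistency} between the two encodings: I must argue that listing each node's children by increasing DFI makes the ordered-tree preorder coincide exactly with the DFI numbering, so that the preorder-indexed primitives of Theorem~\ref{succ_tree} and the permutation $\pi$ can be freely composed without a third translation table. This is true for lex-DFS because a child with smaller DFI has its entire subtree explored before any later sibling, so DFI order on children is exactly the order in which an ordered-tree preorder traversal would recurse --- but this must be stated and checked carefully, as it is the hinge that keeps the extra space at $2n+o(n)$ rather than an additional $n\lg n$. A secondary check is the forest-to-tree reduction (graphs that are disconnected, or directed graphs whose DFS is a forest): adding one artificial root whose children are the roots of the DFS trees in increasing DFI order preserves all the above, at $O(1)$ extra space and without affecting any query semantics once the dummy is accounted for.
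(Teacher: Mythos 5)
Your write-up does not prove the statement under review. The statement is Theorem~\ref{succ_tree}, the succinct ordered-tree representation of Farzan and Munro: a $2n+o(n)$-bit encoding of an $n$-node ordered tree supporting child$(v,i)$, degree, depth, preorder rank and level-ancestor queries in $O(1)$ time, with $O(n)$-time construction. The paper itself offers no proof of this --- it is imported as a black box via the citation --- and your proposal likewise offers none: what you prove is Theorem~\ref{encoding_lower_bound} (the encoding-model lower bound and the $(1+\epsilon)n\lg n + 2n + o(n)$-bit index), and in doing so you explicitly \emph{invoke} Theorem~\ref{succ_tree} to encode the shape of the DFS tree. Relative to the statement in question this is circular: nothing in your argument establishes how an arbitrary ordered tree can be stored in $2n+o(n)$ bits with those constant-time primitives. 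A genuine proof would need the tree-covering machinery (decomposition into minitrees and microtrees of logarithmic and sub-logarithmic size, storing microtree shapes as indices into a precomputed table of all small trees, plus $o(n)$-bit auxiliary directories for depth, preorder rank and level ancestors), none of which appears in your text.

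As a side remark, viewed as an attempt at Theorem~\ref{encoding_lower_bound} your argument is essentially the paper's own upper-bound construction (the DFI permutation stored via Theorem~\ref{perm} composed with the succinct tree whose children are ordered so that preorder coincides with DFI), and your consistency observation about preorder equalling DFI is indeed the right hinge; your lower bound via relabelled paths yields $\Omega(n\lg n)$ by a cleaner route than the paper's, which counts DFS-tree topologies in a dense gadget graph to get $\Omega(n\lg\frac{m}{n})$ and asserts the $\Omega(n\lg n)$ label cost separately. But none of this discharges the statement you were asked to prove.
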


\section{Algorithms in the Indexing Model}\label{main_algo}
In this section, we provide the main algorithmic ideas needed for the solution of the \DFS\ problem in the indexing model. We start by describing the preprocessing procedure which is followed by the query algorithms.
%{\bf Preprocessing Step.} 
\subsection{Preprocessing Step}
We first describe our algorithms for undirected graphs, and later mention the modifications required for the case of directed graphs. The preprocessing step of the algorithm is divided into two parts. In the first part, we perform a DFS of the input graph $G$ along with storing some necessary data structures. In the second step, we perform a partition of the DFS tree of $G$ using the well-known ``tree covering technique'' of the succinct data structures world~\cite{FarzanM11}, and also store some auxiliary data structures. Later, in the final step of our algorithm, we show how to use these data structures to answer the required queries. In what follows, we describe each step in detail.

{\bf Step 1: Creating Parent-Child Array using Unary Degree Sequence Array.} The main idea of this step is to perform a DFS traversal of $G$ and store in a {\it compact way} the parent-child relationship of the DFS tree $T$. The way we achieve this is by using three bitvectors of length $O(m+n)$ bits.
%, which is $O(n)$ bits for sparse graphs. 
%Note that explicitly storing the DFS tree would take $O(n \lg n)$ bits, because of $O(n)$ pointers. Details follow.
Recall that, our input graphs $G=(V,E)$ are represented using the standard adjacency array. Central to our preprocessing algorithm is an encoding of the degrees of the vertices in unary. As usual, let $V =\{v_1, v_2, \cdots, v_n\}$ be the vertex set of $G$. The unary degree sequence encoding $D$ of the undirected graph $G$ has $n$ $1$s to represent the $n$ vertices and each $1$ is followed by a number of $0$s equal to its degree. Moreover, if $d$ is the degree of vertex $v_i$, then $d$ $0$s following the $i$-th $1$ in the $D$ array corresponds to $d$ neighbors of $v_i$ (or equivalently the edges from $v_i$ to the $d$ neighbors of $v_i$) in the same order as in the adjacency array of $v_i$. Clearly $D$ uses $n+2m$ bits and can be obtained from the neighbors of each vertex in $O(m+n)$ time.  Now using {\it rank/select} queries of Theorem~\ref{staticrs} in Section~\ref{prelim}, the $j$-th outgoing edge of vertex $v_i$ can be identified with the position $p = select_1(D,i)+j$ of $D$ ($1 \le j \le degree(v_i)$ where $degree(v_i)$ denotes the degree of the vertex $v_i$). From a position $p$, we can obtain an endpoint of the corresponding edge by $i = rank_1(D, p)$, and the other endpoint is the $j$-th neighbor of $v_i$ where $j = p - select_1(D, i)$.

We also use two bitvectors $E, P$ of the same length where every bit is initialized to $0$, and the bits in $E, P$ are in one-to-one correspondence with bits in $D$. The bitvector $E$ will be used to mark the tree edges of the DFS tree $T$, and the bitvector $P$ to mark the unique parent of every vertex in $T$.
The marking is carried out while performing a DFS of $G$ in the preprocessing step. I.e., if $(v_i, v_j)$ is an edge in the DFS tree where $v_i$ is the parent of $v_j$, and suppose $k$ is the index of the edge $(v_i, v_j)$ in $D$, then the corresponding location in $E$ is marked as $1$ during DFS. At the same time, we scan the adjacency array of $v_j$ to find the position of $v_i$ (as $G$ is undirected, there will be two entries for each edge in the adjacency array), and suppose $t$ is the index of the edge $(v_j,v_i)$ in $D$, then the corresponding location in $P$ is marked as $1$ during DFS. Thus, assuming $G$ is a connected graph, once DFS finishes traversing $G$, the number of ones in $E$ is exactly the number of tree edges (which is $n-1$) and the number of ones in $P$ will be $n-1$ as root does not have any parent. 

The parent of $v_i$ in $T$ is computed in $O(1)$ time
as follows.
Let $v_r$ be the root of $T$.  Then if $i > r$
(resp. $i<r$), the marked bit representing the parent of
$v_i$ is the $(i-1)$-st (resp. $i$-th) $1$ in $P$.
Let $p = select_1(P, i-1)$ (resp. $p = select_1(P, i)$)
and $j = p - select_1(D, i)$.  Then the parent of $v_i$
is the $j$-th neighbor of $v_i$.

We use another bitvector $D_T$ of length $2n$,
which encodes the degree of each vertex in $T$
by unary sequences.  Then the degree of vertex $v_i$ in $T$
is $select_1(D_T, i+1)-select_1(D_T, i)-1$, and
$j$-th child of $v_i$ in $T$ is $p$-th neighbor of $v_i$
in $G$ where $p = select_1(E, select_1(D_T, i-1)+j)-select_1(D, i)$.  These are computed in constant time.

Note that, the classical linear time implementation of DFS~\cite{CLRS} uses a stack (which could grow to $O(n \lg n)$ bits) and a color array (of size $O(n)$ bits). Thus, the procedure takes $O(m+n)$ time and $O(n \lg n)$ bits overall. First, we argue that using the same linear time, we can also create bitvectors $D, E$ and $P$ and fill up them correctly. It's easy to see that creating $D$ as well as initializing $E$ and $P$ to all zero takes $O(m+n)$ time. All it remains is to show, how one can fill up $E$ and $P$ while performing DFS. For this purpose, we build the data structures to support the constant time {\it rank/select} query (of Theorem~\ref{staticrs}) on $D$ (and on $E, P$ as well, the reason will be clear in the query step) and use the result of the select query to mark the tree edges on $E$ (as they are in one-to-one correspondence). To illustrate, suppose, while traversing from $v_i$, DFS discovers the edge $(v_i, v_j)$ as a tree edge in $T$ where $v_i$ is the parent of $v_j$, and suppose $v_j$ is the $c$-th neighbor in $v_i$'s adjacency array, then we find the index of the $c$-th zero after $i$-th one in $D$ (using select query), and the corresponding index is marked as $1$ in the $E$ array. This takes $O(1)$ time for each tree edge marking. After this, we mark the index in $P$ as $1$ corresponding to the edge $(v_j,v_i)$ to denote that $v_i$ is the parent of $v_j$. Thus, marking parent takes $O(degree({v_j}))$ time for the vertex $v_j$. Note that, all of this happens along with the classical stack-based DFS implementation. Thus overall it takes $O(m+n)$ time, and space required to store all these arrays is $O(m+n)$ bits. We refer to the bitvector $D$ as the unary degree sequence array, $E$ as the child array, and $P$ the parent array. These three arrays are stored and used for the query step of our algorithm. Thus, we obtain the following lemma.

\begin{lemma}
Given an undirected graph $G$, there exists an $O(m+n)$ time and $O(n \lg n)$ bits preprocessing algorithm to construct the unary degree sequence array, parent and child arrays for $G$, each of which takes $O(m+n)$ bits of space.
\end{lemma}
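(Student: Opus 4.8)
The plan is to realize the three bitvectors one after another, relying only on the static \emph{rank/select} machinery of Theorem~\ref{staticrs}, and to piggyback the marking of $E$ and $P$ onto an ordinary stack-based DFS. First I would build $D$ by a single left-to-right pass over the adjacency arrays of $v_1,\dots,v_n$, emitting a $1$ followed by $degree(v_i)$ zeros for each $v_i$; this costs $O(m+n)$ time and yields a string of length $n+2m$. On top of $D$ I would immediately construct the $o(m+n)$-bit auxiliary structure of Theorem~\ref{staticrs}, giving constant-time $rank$ and $select$ on the (now final and immutable) $D$ in $O(m+n)$ time and space. In parallel I allocate $E$ and $P$ of the same length $n+2m$, all bits zero, whose positions are by construction in bijection with those of $D$, hence with the directed incidences $(v_i,v_j)$ of $G$.

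Next I would run the classical $O(m+n)$-time, $O(n\lg n)$-bit DFS of~\cite{CLRS} (an explicit stack plus an $O(n)$-bit colour array). Whenever the traversal follows a tree edge from $v_i$ to a freshly discovered $v_j$, where $v_j$ is the $c$-th entry in $v_i$'s adjacency array, I compute $p=select_1(D,i)+c$ and set $E[p]\gets 1$ in $O(1)$ time; then I walk $v_j$'s adjacency array to locate an occurrence of $v_i$, at index $c'$ say, and set $P[\,select_1(D,j)+c'\,]\gets 1$. Crucially, no dynamic rank/select on $E$ or $P$ is needed: during DFS these vectors are only \emph{written} (never queried), and all position computations are performed on the static $D$. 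For the time bound: the $E$-marking is $O(1)$ per tree edge, hence $O(n)$ overall; the $P$-marking costs $O(degree(v_j))$, but each vertex $v_j$ is the head of at most one tree edge, so $\sum_j O(degree(v_j))=O(m)$. Together with the $O(m+n)$ cost of DFS itself, the whole step is $O(m+n)$.

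Finally, once DFS terminates the contents of $E$ and $P$ are fixed, and I would build the static rank/select structures of Theorem~\ref{staticrs} on each of them in $O(m+n)$ time and $o(m+n)$ additional bits (these are required by the later query algorithms). For the space accounting: each of $D$, $E$, $P$ has length $n+2m=O(m+n)$ bits and carries only $o(m+n)$ bits of overhead, so each occupies $O(m+n)$ bits; the preprocessing working space is dominated by the DFS stack and colour array, i.e.\ $O(n\lg n)$ bits. The same construction works verbatim on a disconnected $G$, producing a DFS forest, where $E$ and $P$ simply carry $n$ minus the number of trees ones instead of $n-1$.

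The only genuine subtlety, and the point I would be most careful about, is the interaction between writing into $E,P$ and later querying them: one must avoid needing an updatable rank/select structure during the traversal. The argument above sidesteps this by keeping all index arithmetic on the immutable $D$ and deferring the construction of rank/select on $E,P$ until after DFS. The second thing to verify carefully is the amortization of the parent-marking scans, which relies precisely on the fact that the scan of $v_j$'s adjacency list is triggered exactly once, at the moment $v_j$ is first discovered, so that the total scanning cost telescopes to $\sum_j degree(v_j)=2m$.
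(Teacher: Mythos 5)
Your proposal is correct and follows essentially the same route as the paper: build the unary degree sequence $D$ with static rank/select, keep $E$ and $P$ in one-to-one correspondence with $D$, and mark tree edges and parents during an ordinary stack-based DFS, with the $O(1)$-per-tree-edge marking of $E$ via select on $D$ and the $O(degree(v_j))$ adjacency-list scan for $P$ amortizing to $O(m)$. Your explicit observation that $E$ and $P$ are only written during the traversal, so their rank/select structures can be built afterwards on the static bitvectors, is a small clarification of the same construction rather than a different argument.
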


{\bf Step 2: Decomposing the DFS tree by the Tree Covering Technique.} The main idea of this step is to perform a decomposition of the DFS tree, and along with storing some crucial informations which will be very useful for navigating the tree during the query step of our algorithm. For this purpose, we use the well-known tree covering technique in the context 
of succinct representation of rooted ordered trees. The high level idea is to decompose the tree into subtrees called 
{\it minitrees}, and further decompose the minitrees into yet smaller subtrees called {\it microtrees}. The microtrees are small enough to be stored in a compact table. The root of a 
minitree can be shared by several other minitrees. To represent 
the tree, we only have to represent the connections and links between the subtrees. One such tree decomposition method was given by Farzan and Munro~\cite{FarzanM11} where each minitree has at most one node, other than the root of the minitree, that is connected to the root of another minitree. This guarantees that in each minitree, there exists at most one non-root node which is 
connected to (the root of) another minitree. We use this decomposition in our algorithms, and the main result of Farzan et al.~\cite{FarzanM11} is summarized in the following theorem:

\begin{theorem}[\cite{FarzanM11}]\label{thm:tree-decomposition}
For any parameter $L \ge 1$, a rooted ordered tree with $n$ nodes can be decomposed into $\Theta(n/L)$ minitrees of size at most $2L$ which are pairwise disjoint aside from the minitree 
roots. Furthermore, aside from edges stemming from the minitree root, there is at most one edge 
leaving a node of a minitree to its child in another minitree. The decomposition can be performed 
in linear time using linear words of space.
\end{theorem}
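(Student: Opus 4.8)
The plan is to prove the decomposition by a single bottom-up pass that greedily carves the tree into connected pieces of size $\Theta(L)$, following the greedy strategy of Farzan and Munro. First I would run one post-order traversal to compute, for every node $v$, the size $n_v$ of the subtree rooted at $v$; this costs $O(n)$ time and $O(n)$ words. The heart of the construction is a recursive routine \emph{Decompose}$(v)$ that outputs the minitrees already carved out of the subtree $T_v$ together with a single \emph{residual}: a connected subtree rooted at $v$ of size strictly less than $L$. The invariant I would carry along the recursion is that at most one node of the residual other than $v$ is incident to an edge leading to (the root of) an already-carved minitree; this is what eventually yields the ``at most one edge leaving a non-root node'' property.

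At a node $v$ with children $c_1,\dots,c_k$ taken in left-to-right order, I would first recurse on each child. A \emph{heavy} child $c_i$, i.e.\ one with $n_{c_i}\ge L$, is handled so that \emph{Decompose}$(c_i)$ finalizes everything including $c_i$ into minitrees and returns an empty residual; then the only edge from $v$ into $T_{c_i}$ is $(v,c_i)$ and $c_i$ is a minitree root, so that edge ``stems from a minitree root''. A \emph{light} child $c_i$, with $n_{c_i}<L$, returns all of $T_{c_i}$, which contains no carved minitree and hence no edge to one. These light subtrees, together with $v$ itself, are then batched greedily in child order: we accumulate $T_{c_i}$'s into a group until the running size first reaches $L$, at which point we close the group and declare $\{v\}$ together with that batch a minitree rooted at $v$ (the node $v$ is thus shared among several minitrees, which is precisely why minitrees are disjoint only away from their roots). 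Since each light subtree has size below $L$ and a group is closed as soon as it reaches $L$, every closed minitree has size in $[L,2L)$. The last, still-open batch is returned, with $v$, as the residual; one checks it has size $<L$ and, by the heavy/light split, carries at most one ``loose end'', so the invariant is restored.

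For the count, the upper bound $O(n/L)$ comes from charging every closed minitree to the $\ge L$ distinct non-root nodes it contains, each non-root node being charged at most once; the leftover batches that fall below $L$ must be absorbed into an adjacent minitree or pushed into the residual so that only $O(n/L)$ of them ever survive, and this bookkeeping is the subtle part. The lower bound $\Omega(n/L)$ is the trivial observation that pieces of size $O(L)$ need at least $\Omega(n/L)$ of them to cover $n$ nodes. Linear time and linear space are immediate, since apart from the two traversals the algorithm only maintains the subtree sizes and one component label per node.

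The step I expect to be the main obstacle is reconciling the three guarantees simultaneously: every minitree must have size at most $2L$, the decomposition must use $\Theta(n/L)$ minitrees (so minitrees must not be allowed to become systematically tiny), and every minitree may have, besides the edges at its root, only one further outgoing edge. Any naive greedy rule meets two of the three but tends to leave, at a constant fraction of the nodes, either an $O(1)$-sized leftover batch or an extra hanging edge. The careful case analysis deciding when a leftover batch is fused with a sibling batch, when it is pushed up into the residual, and when the node $v$ is kept as an internal node of its parent's minitree versus promoted to a root, is the crux of the argument, and is exactly what Theorem~\ref{thm:tree-decomposition} of~\cite{FarzanM11} packages for us.
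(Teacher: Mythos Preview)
The paper does not prove Theorem~\ref{thm:tree-decomposition}; it is quoted as a known result from Farzan and Munro~\cite{FarzanM11} and used as a black box in the subsequent construction. Your sketch is a faithful outline of the Farzan--Munro greedy bottom-up decomposition, and you correctly flag the delicate bookkeeping (reconciling the $\Theta(n/L)$ count, the $2L$ size cap, and the single non-root outgoing edge) as the crux---indeed, you yourself close by noting that this is exactly what the cited reference packages, so there is no proof in the present paper to compare against.
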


\begin{figure}[bt]
\begin{center}
 \includegraphics[scale=.45, keepaspectratio=true]{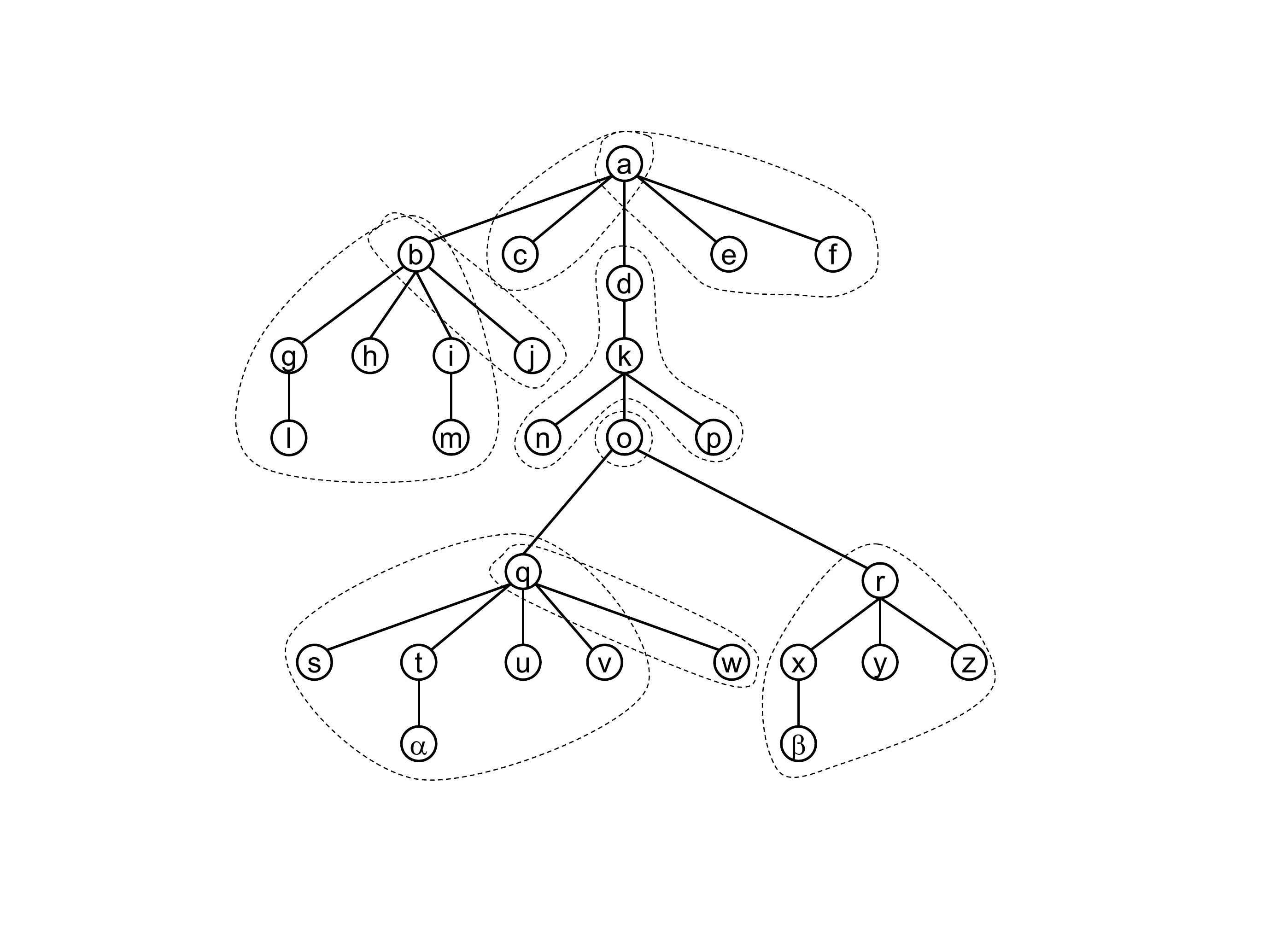}
\end{center}
\caption{An example of Tree Covering technique with $L=5$. 
%The figure is taken from~\cite{FarzanM11}. 
Each closed region formed by the dotted lines represents a minitree. Here each minitree has at most one `child' minitree (other than the minitrees that share its root).}\label{fig:1}
\end{figure}

See Figure~\ref{fig:1} for an illustration. For the purpose of our algorithms, we apply Theorem~\ref{thm:tree-decomposition} with $L =\lg n$ on the DFS tree $T$ of $G$. For this parameter $L$, since the number of minitrees is only $O(n/\lg n)$, we can represent the structure of the minitrees within the original tree (i.e., how the minitrees are connected with each other) 
using $O(n)$ bits by simply storing both way pointers (so that we can traverse easily) between the roots of the minitrees. We refer to this as the {\it skeleton} $S$ of the DFS tree $T$. See Figure~\ref{fig:2} for a demonstration of Figure~\ref{fig:1}'s skeleton. The decomposition algorithm of~\cite{FarzanM11} also ensures that each minitree has at most one `child' minitree (other than the minitrees that share its root) in this structure. We use this property crucially later. 

\begin{figure}[bt]
\begin{center}
 \includegraphics[scale=.7, keepaspectratio=true]{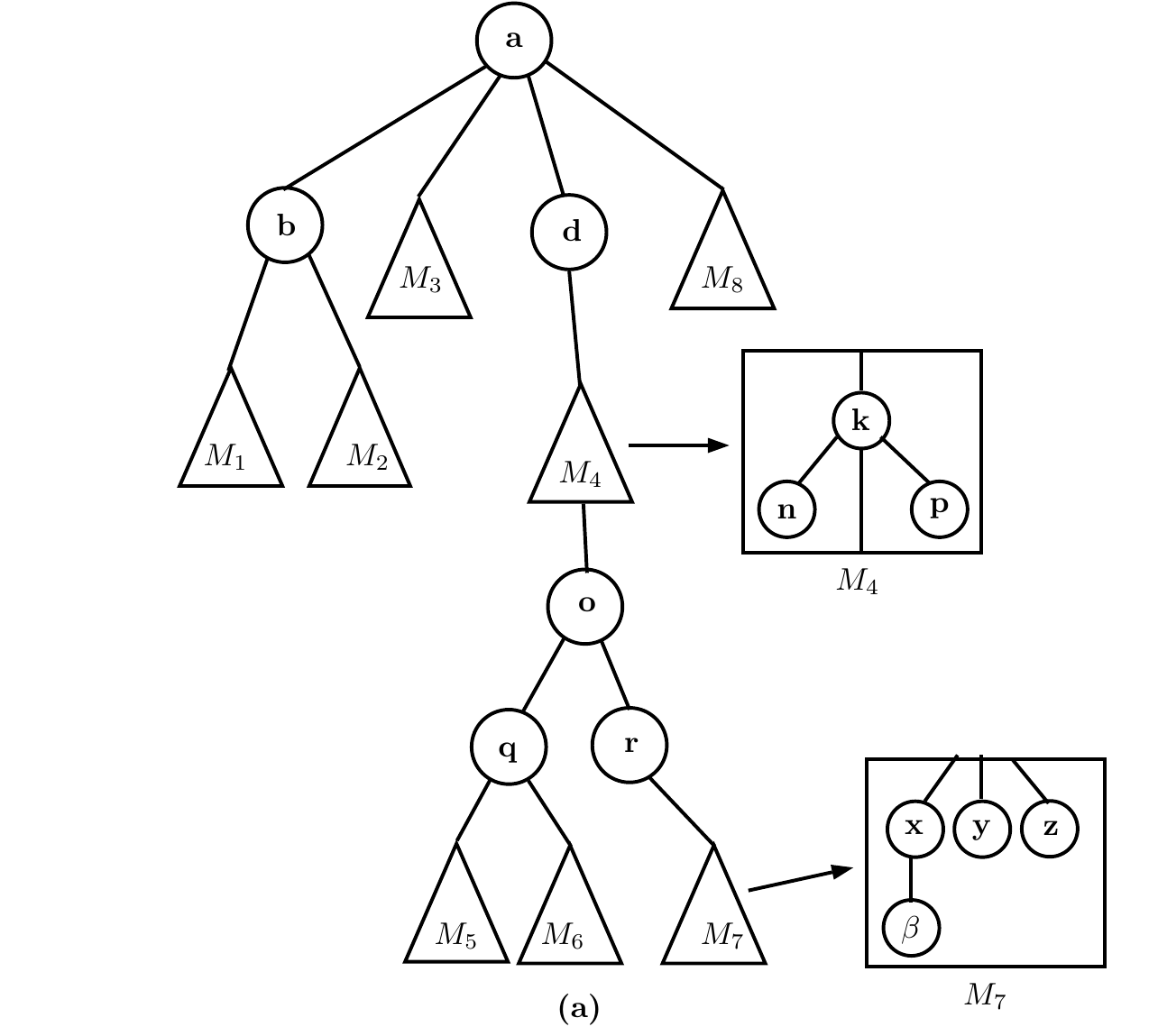}
\end{center}
\caption{(a) A rough sketch of the skeleton of the tree decomposition shown in Figure~\ref{fig:1}. In this diagram, the triangles represent the minitrees along with the roots of the minitrees are marked inside the circle. For example, the minitrees $M_1$ and $M_2$ share the same root $b$. Also the node $o$ is a minitree on its own. Strictly speaking, the skeleton will not have the traingles, rather it just contains the pointers between the roots of the minitrees (i.e., circles in this diagram). But we put this diagram for better visual description of the compact representation of the previous diagram.}\label{fig:2}
\end{figure}

In what follows, we explain how we compactly represent the minitree structure, and we refer to this compact representation obtained using this tree covering (TC) approach as the TC representation of the DFS tree. Towards this, first observe that every minitree root has unique first child and last child inside the minitree. In some cases, both are the same (see the minitree rooted at node $d$ of Figure~\ref{fig:1}), and in some cases, both are absent (see the minitree rooted at node $o$ of Figure~\ref{fig:1}). Thus, if we specify these two quantities, we can uniquely identify the root of the minitree (along with the exact portion of the nodes which are children of the root of this minitree and also belong to the same minitree as the first and last child of the root) even though the root is shared between multiple minitrees. We use this idea crucially in the design of the TC representation of the DFS tree. 

We mark in a bitvector $R$ of size $n$ all the nodes which are the last child of a minitree root inside a minitree. Note that, there are 
$O(n/\lg n)$ such nodes which are marked as $1$ in $R$. In the case of a minitree root not having any children, we mark the minitree root itself as $1$ in $R$. We also build the data structure to support $O(1)$ time {\it rank/select} queries on $R$ using Theorem~\ref{staticrs}. Next, we create an array $C$ where each of the $O(n/\lg n)$ entries are $O(\lg n)$ bits long, thus overall it takes $O(n)$ bits. Basically, each entry of $C$ stores some informations regarding the minitree for which the last child of the minitree root is marked $1$ in $R$. More specifically, For a typical node, say $v_i$, which is the last child of some minitree, we have $R[i]=1$, and $C[j]$ (where $j=rank_1(R,i)$) comprises of the following six informations (some of which could be empty), (i) label of the minitree root, say $v_r$, for which $v_i$ is the last child inside the minitree, (ii) location of the first child, say $v_j$, of $v_r$ inside the minitree in the adjacency array of $v_r$, (iii) DFI of $v_j$, (iv) the edge $(v_c,v_d)$ (if any) that goes out of the minitree, (v) the size of the subtree rooted at $v_c$ in the DFS tree, (vi) depth of $v_r$ in the DFS tree. The tree decomposition method ensures that a minitree has at most one edge $(v_c,v_d)$, where $v_c$ is a non-root node of minitree and $v_d$ is a root of a different minitree, that goes out of the minitree. We also mark in a bitvector $Z$ of size $n$ bits all such vertices like $v_c$ (also note, there could be $O(n/\lg n)$ such vertices). We mark in a bitvector $L$ all the vertices which are the rightmost leaves of every minitree. Note that these vertices (there are, again, $O(n/\lg n)$ of them) have the highest DFI inside the minitree. In another bitvector $A$, we mark all the roots of the minitrees as $1$, and build rank/select structure on top of $A$. Correspondingly, the $F$ array will store the DFI of the roots so that we can retrieve them in constant time. More specifically, for a minitree root $v_r$, $A[r]=1$ and $F[j]$ ($j=rank_1(A,r)$) will store the DFI of $v_r$. Next we build the $O(1)$ time level ancestor data sturcture, say $LA$, on the $O(n/\lg n)$ minitree roots (i.e., on the skeleton structure) using~\cite{BenderF04}. Thus, here, $LA$ takes $O(n)$ bits and $O(n)$ time. As a root of the minitree is shared between multiple minitrees, from each node $v_i$ of the skeleton $S$ (where $v_i$ is a root of a minitree), we store pointers to all the minitrees (in $C$ array) which has $v_i$ as their root. Overall these pointers also consume $O(n)$ bits. This completes the description of the TC representation. Note that, the creation of the skeleton and the TC representation for $T$ can be done in $O(n)$ time using $O(n \lg n)$ bits (using Theorem~\ref{thm:tree-decomposition}) after the DFS (which takes $O(m+n)$ time and $O(n \lg n)$ bits). Hence, we obtain the following,

\begin{lemma}
Given an undirected graph $G$, there exists an $O(m+n)$ time and $O(n \lg n)$ bits preprocessing algorithm to construct the skeleton $S$ and the TC representation of the DFS tree $T$ of $G$, each of which occupies $O(n)$ bits.
%. The size of both of these structures is $O(n)$ bits.
\end{lemma}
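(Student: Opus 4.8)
The plan is to build the two objects in the two phases already sketched, and then audit three quantities separately: the running time, the transient working space (which must remain $O(n\lg n)$ bits), and the size of what is retained (which must be $O(n)$ bits for the skeleton $S$ and $O(n)$ bits for the TC representation).

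First I would carry out Step~1, which by the preceding lemma costs $O(m+n)$ time and $O(n\lg n)$ bits and produces the arrays $D$, $E$, $P$, $D_T$; from these, parent and children queries on $T$ take $O(1)$ time, so an explicit traversal of $T$ itself costs $O(n)$ time with an $O(n\lg n)$-bit stack. During (or immediately after) the DFS I also record, in $O(n\lg n)$-bit working arrays, the DFI, the depth, and the size of the subtree rooted at every vertex of $T$; these are exactly the payloads that will later be copied into the $C$ and $F$ arrays, so precomputing them once is convenient and costs only $O(n)$ extra time and space.

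Next I would invoke Theorem~\ref{thm:tree-decomposition} with $L=\lg n$ on $T$. It runs in linear time and linear words, i.e.\ $O(n\lg n)$ bits of working space, and returns $\Theta(n/\lg n)$ minitrees of size at most $2\lg n$, pairwise disjoint except at shared roots, with the property that each minitree has at most one non-root node whose child lies in another minitree. From the list of minitree roots I form the skeleton $S$ by storing two-way pointers from each minitree to its parent minitree and to its (unique) non-root-sharing child minitree, together with, from each skeleton node, the list of pointers to all minitrees having that node as their root; since there are only $O(n/\lg n)$ minitrees, every pointer fits in $O(\lg n)$ bits, so $S$ occupies $O(n)$ bits and is built in $O(n)$ time. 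On the skeleton I then install the level-ancestor structure $LA$ of~\cite{BenderF04}: a tree on $O(n/\lg n)$ nodes needs $O(n/\lg n)$ words, i.e.\ $O(n)$ bits, built in $O(n/\lg n)$ time.

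Finally I would make a single pass over the minitrees to fill the bitvectors $R,Z,L,A$ (each of length $n$) and the arrays $C,F$. For each minitree I locate its root $v_r$ and the first and last child of $v_r$ that still lie in this minitree by scanning $v_r$'s children list restricted to the minitree; summed over all minitrees this is $O(n)$ work. I then set the bit of $R$ at that last in-minitree child (or the bit of $v_r$ itself when $v_r$ has no children), write into the corresponding slot of $C$ the six $O(\lg n)$-bit fields named in the text — the label $v_r$, the adjacency-array position of $v_r$'s first in-minitree child, that child's DFI read off the precomputed array, the unique outgoing edge $(v_c,v_d)$, the subtree size of $v_c$, and the depth of $v_r$ — and write the DFI of $v_r$ into $F$ when $v_r$ is a minitree root; thus $C$ and $F$ together are $O(n)$ bits. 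In the same pass I set the bits of $Z$ (the nodes $v_c$), of $L$ (the rightmost leaf of each minitree), and of $A$ (the minitree roots), and then equip $R,Z,L,A$ with the $O(1)$-time rank/select structures of Theorem~\ref{staticrs}. Adding up: total construction time $O(m+n)$, transient working space at most $O(n\lg n)$ bits, and the retained output $S$ together with $\{R,Z,L,A,C,F,LA\}$ of $O(n)$ bits. I expect the only non-routine point to be the bookkeeping that associates each minitree with ``its own'' bit of $R$: because a root is shared among several minitrees, this association must go through the (first in-minitree child, last in-minitree child) pair, and one must verify both that this pair is unique per minitree when $v_r$ has children and that the two degenerate cases — $v_r$ childless, and first child $=$ last child — are covered by the special convention on $R$; everything else is accounting against the two space budgets.
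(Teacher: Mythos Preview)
Your proposal is correct and follows essentially the same construction the paper gives: run Step~1, apply the tree covering of Theorem~\ref{thm:tree-decomposition} with $L=\lg n$, build the skeleton from the $O(n/\lg n)$ minitree roots, populate $R,C,Z,L,A,F$ and the $LA$ structure, and account for time and space. The paper itself does not give a separate formal proof of this lemma beyond the surrounding description and a one-line justification, so your write-up is in fact more detailed than the original; the extra care you take with the degenerate cases of $R$ and with precomputing DFI/depth/subtree-size is sound and compatible with the stated bounds.
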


First observe that, the outputs of the previous step are the unary degree sequence array ($D$), parent array ($E$), child  array ($P$), the $D_T$ array, TC representation of $T$ (this includes $R$, $C$, $Z$, $A$, $F$ and $L$) along with the skeleton $S$ with pointers to $C$, and finally the $LA$ structure on $S$. The arrays $D$, $E$, and $P$ take $O(m+n)$ bits, and the others take $O(n)$ bits. Now we show how to efficiently solve the \DFS\ problem using these structures. 

{\bf Query Algorithms.} Given $v_i$, to answer 2(a) in $O(1)$ time, we do the following. If $v_i$ is the root of the DFS tree, we return null. 
%Otherwise, we return the corresponding vertex marked $1$ in the $P$ array for $v_i$. More specifically, we return the $c$-th neighbor (in the adjacency array) of $v_i$ as $v_i$'s parent in $T$ where $c=select_1(P,rank_1(P,select_1(D,i))+1)-select_1(D,i)$. 
Otherwise, we can compute the answer by using
only $select_1$ queries on $P$ and $D$, as described previously.

To answer 2(b) in $O(1)$ time, 
%we return $d$ where $d$ is the answer of the following query, $d=rank_1(E, select_1(D, i+1))-rank_1(E, select_1(D, i))$.
we use $select_1$ queries on the bitvector $D_T$.

To answer 2(c), 
%we do the following. Suppose $v_i$ has $d$ children (can be found out from the previous query). As the first child of $v_i$ in $T$, we return the vertex located at the $e_1$-th location in the adjacency array of $v_i$, where $e_1=select_1(E,rank_1(E,select_1(D,i))+1)-select_1(D,i)$. In general, as the $i$-th child, we return the vertex located at the $e_i$-th location in the adjacency array of $v_i$, where $e_i=select_1(E,rank_1(E,select_1(D,i))+i)-select_1(D,i)$, and so on until we exhaust all the $d$ children. Thus it takes $O(d)$ time. This also shows that using $O(1)$ time delay we can find out the next sibling. 
we first compute the number of children of $v_i$ in $T$
using the query 2(b).  Then $j$-th child is obtained
in constant time as described above.

Note that, the queries 2(a), 2(b) and 2(c) can be answered using only $D$, $E$, $P$ and $D_T$ arrays. Before explaining the algorithms for the rest of the queries, we first prove
%(proof is in Appendix~\ref{lemmaproof}) 
the following very crucial lemma.
\begin{lemma}\label{reconstruct}
Given any query node $v_i$ which is not a root of a minitree, we can reconstruct the minitree $M$ containing $v_i$ in time proportional to the size of $M$ along with the DFIs of all the nodes inside $M$. In the same amount of time, we can also retrieve the root node of $M$.
\end{lemma}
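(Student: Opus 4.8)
The plan is to first locate the root $v_r$ of the minitree $M$ that contains $v_i$, then recover the whole node set of $M$ by a short local traversal, and finally hand out the DFIs of those nodes. First I would find $v_r$. Since $v_i$ lies in $M$ and $M$ is a connected subtree of $T$ containing both $v_i$ and $v_r$, the tree path from $v_i$ up to $v_r$ stays entirely inside $M$, and as $|M|\le 2L=2\lg n$ this path has fewer than $2L$ edges. Moreover no internal node $w$ of this path can be a minitree root: such a $w$ would lie in $M$ and also in a minitree rooted at $w\neq v_r$, contradicting the fact (Theorem~\ref{thm:tree-decomposition}) that two minitrees meet only in common roots. Since $v_i$ itself is assumed \emph{not} to be a minitree root, it follows that $v_r$ is exactly the first node encountered on the upward walk from $v_i$ that is marked in the bitvector $A$. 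I would therefore climb parent pointers from $v_i$ using the constant-time query 2(a) (which, as explained in Step~1, also returns the position of the traversed edge in the parent's adjacency array), testing $A$ at each step, until hitting $v_r$; this costs $O(|M|)$ time, and then $\mathrm{DFI}(v_r)=F[rank_1(A,r)]$.

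During the climb I would also record the child $u$ of $v_r$ through which the path enters $v_r$, together with the position $p$ of $u$ in $v_r$'s adjacency array. The node $v_r$ may be the shared root of several minitrees, but the decomposition makes each of them own a contiguous block of $v_r$'s children, and these blocks together account for $v_r$'s children, so the minitree I want is the unique one whose block contains $u$. Following the pointers stored at the skeleton node $v_r$ into the array $C$, I would scan the relevant entries: field (ii) of an entry gives the first child of that minitree, and the $R$-bit associated with the entry (recovered by $select_1(R,\cdot)$) identifies its last child, so each child-block is available in $O(1)$ time and I can test which one contains $u$. This pins down the entry $C[j]$ describing $M$, from which I read the outgoing edge $(v_c,v_d)$ (field (iv)), the size $s$ of the subtree of $v_c$ in $T$ (field (v)), and the DFI of the first child $v_j$ of $M$ inside $M$ (field (iii)).

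Next I would reconstruct the node set of $M$. Starting at $v_r$, I run a bounded traversal that visits only the children of $v_r$ in the block just identified and, from each of those, descends via tree edges; tree edges and tree-degrees are obtained in $O(1)$ per node from the bitvectors $E$, $D$ and $D_T$ exactly as in Step~1, so each visited node costs $O(1)$. The only place the traversal must be cut off short of a leaf of $T$ is at $v_c$, where it skips the child $v_d$ — by Theorem~\ref{thm:tree-decomposition} this is the unique edge of $M$ leading to another minitree — so that exactly the nodes of $M$ are enumerated. Since $|M|\le 2L$, this is $O(|M|)$ time, and in the same pass I count the number $k$ of visited nodes that lie in the subtree of $v_c$.

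Finally I would assign the DFIs. In preorder the nodes of $M$ occupy a set of DFI values starting at $\mathrm{DFI}(v_j)$ (known from $C$) that is contiguous except for a single gap, namely the block of DFIs of the deleted subtree rooted at $v_d$; that there is exactly one gap is again the ``at most one outgoing edge'' property, and its length is $s-k$. Hence a preorder walk of $M\setminus\{v_r\}$ from $v_j$, handing out consecutive integers beginning at $\mathrm{DFI}(v_j)$ and advancing the counter by $s-k$ at the point where $v_d$ would have been visited, produces $\mathrm{DFI}$ of every node of $M$; combined with $\mathrm{DFI}(v_r)$ and with $v_r$ itself, this completes the reconstruction in $O(|M|)$ total time. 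I expect the main obstacle to be the bookkeeping in the second step — matching the recorded vertex $u$ and its adjacency-array position against the ``first child / last child'' data stored in $C$ so that the block containing $u$ is recognized in constant time — together with a careful argument that the removed portion of $M$ forms a single preorder-contiguous block, so that exactly one gap of length $s-k$ has to be patched.
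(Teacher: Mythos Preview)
Your plan is sound and the DFI arithmetic in the last paragraph is correct (the gap $s-k$ is exactly the size of the subtree of $T$ rooted at $v_d$, which is what must be skipped). Where your route diverges from the paper's, and where the time bound is not secured, is precisely the step you yourself flag as the main obstacle.

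Climbing from $v_i$ via parent pointers until you hit a node marked in $A$ does locate $v_r$ in $O(|M|)$ steps, and your argument that no intermediate node on that path is a minitree root is right. But $A$ identifies only the root, not the minitree: a single root can be shared by $\Theta(n/\lg n)$ minitrees (a star on $n$ vertices already forces this), so a linear scan of the pointer list at the skeleton node $v_r$ costs $\Theta(n/\lg n)$ rather than $O(|M|)$, and even a binary search on the first-child positions only gets you $O(\lg n)$, which still exceeds $O(|M|)$ when $M$ happens to be small. The paper sidesteps this disambiguation entirely by aiming not for $v_r$ but for the \emph{last child} $v_k$ of $v_r$ inside $M$: that node is marked in $R$, so once reached, $j=rank_1(R,k)$ is a direct index into the correct entry of $C$ with no search. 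Concretely, the paper first runs a downward DFS from $v_i$ within $M$ --- at the unique boundary node $v_c$ (marked in $Z$) the one child leading out of $M$ is itself a minitree root and can be recognised and skipped, and arrival at the rightmost leaf of $M$ is detected via $L$ --- and then climbs parent pointers from that leaf until it meets an $R$-marked node, which is $v_k$. Both legs stay inside $M$, so the cost is $O(|M|)$. Once the $C$-entry is pinned down, your bounded traversal of $M$ from $v_r$'s child-block and your single-gap DFI assignment coincide with the paper's final step.
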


\begin{proof}
First note that if a node $v$ belongs to the minitree $M$,
its children in $T$ also belong to $M$, except for the
following two cases.  The first case is that $v$ is the
root $v_r$ of $M$ and the second case is that $v$ is $v_c$
of $M$.  In the first case, as we have stored the location (in the adjacency array) of the first child, say $v_j$, of $v_r$ inside $M$ in the $C$ array, we can enumerate all the children of $v_r$ in $M$
in constant time for each until we hit the rightmost child
of $v_r$ in $M$, which is stored in $R$.
In the second case, we can also enumerate all the children
of $v_c$ in $T$ in constant time for each and discard $v_d$.
For other vertices in $M$, we can enumerate children
using constant time for each output.  Note also that
going to the parent can be performed in constant time.

The algorithm to achieve the claim can be broken down into three steps. In the first step, given $v_i$, we launch a DFS starting from $v_i$, and continue till we retrieve the rightmost leaf, say $v_j$, of the minitree $M$. In second step, we follow the path in $T$ (by going to $v_j$'s parent, then its parent and so on) till we reach the rightmost child, say $v_k$, of the root, say $v_r$, inside $M$ by using the query algorithm to find the parent repeatedly. In the third and final step, we use the $C$ array, by using $v_k$, to extract all the informations needed to reconstruct the full minitree by performing another round of DFS. We provide the details below.

To perform the first step, we only need to use the parent and child related queries, whose execution we already showed previously. Note that, as we have stored the information (in $Z$ array) regarding the only edge that goes out of the minitree, we never incorrectly go out of $M$. Also we can verify if we have reached the unique node $v_j$ which is the rightmost leaf of $M$ from the $L$ array. Once we reach $v_j$, it's easy to see that $v_k$ has to be an ancestor of $v_j$ (note that $v_k$ and $v_j$ could be same in some cases). Thus, we can reach $v_k$ from $v_j$ by repeatedly using the parent query algorithm, and this completes the second step. Finally, once we reach $v_k$, we use the informations in $C[j]$ (where $j=rank_1(R,k)$) to retrieve the root $v_r$ of $M$ and other informations. Then we carry out a DFS from $v_r$ by first going to the first child of $v_r$ inside $M$ (retrieved from $C$), then its first child and so on till we fully reconstruct $M$. This step also requires repeated invoking of parent and child query only.

In order to retrieve the DFIs of the nodes inside $M$, observe that, if $M$ doesn't have any child minitree (i.e., no edge is going out of $M$), then while doing the final DFS from $v_r$, we can easily compute the DFIs of all the nodes inside $M$. Otherwise, assume the edge $(v_c,v_d)$ goes out of $M$ where $v_c$ belongs to $M$, then the DFI of next node inside $M$ can be calculated by adding the size of the subtree rooted at $v_c$ in $T$ (which is stored in $C$) to the DFI of $v_c$. It is clear that all of these procedures can be performed in the time proportional to the size of $M$, which is $O(\lg n)$ here. This completes the description of the proof.
\end{proof}

As a corollary of the previous lemma, it is easy to see that the query of 2(d) can be reported in $O(\lg n)$ time for any node $v_i$ which is not a root of the minitree. Otherwise, it can be done in $O(1)$ time by reporting the value stored in $F[j]$ where $j=rank_1(A,i)$. 

To answer 1(a), first we invoke query algorithm of 2(d) for both $v_i$ and $v_j$ to retrieve their DFIs respectively, and then answer accordingly. Thus, this also takes $O(\lg n)$ time.

Answering 1(b) involves a few cases. In the first case, if both of them belong to the same minitree then we can figure out the answer by reconstructing the complete minitree. Secondly, suppose $v_i$ and $v_j$ are roots of the two separate minitrees, and their depths in $T$ are $x$ and $y$ respectively (depth can be obtained from $C$ array). Then using these values in $LA$ data structure, we can figure out the required answer. Finally, if both of these nodes belong to two separate minitrees but are not the roots of the minitrees, then first we retrieve the roots of those minitrees using Lemma~\ref{reconstruct}, then follow almost the same procedure as before to figure out the answer. Note that, in this case, it is enough to reconstruct the path from $v_c$ (of the minitree located near to the root) to the root of that minitree (for the case when one of the minitree root is an ancestor of the other minitree root) to figure out the answer of the query. Thus, overall, it takes $O(\lg n)$ time.

To return the query for 3, we do a standard DFS traversal on the skeleton $S$ and each time we visit a new node $v_i$ in $S$, we follow the pointer from $v_i$ in $S$ to the part of the $C$ array where the informations regarding the minitree rooted at $v_i$ is stored. Note that, $v_i$ might be shared between multiple minitrees, hence, we always start following these pointers from left to right. More specifically, if $v_i$ is the root of $p$ minitrees, we have $p$ pointers emanating from $v_i$, and going to $p$ different locations of $C$ array. As these pointers are stored from left to right order, which is the same order in DFS of all the minitrees that share the root $v_i$. Thus we follow the first pointer, and reach the specific portion of $C$, use Lemma~\ref{reconstruct} to generate the complete minitree along with the DFIs of the nodes. Then if this minitree has any child minitree, we go on to explore that and so on (by following the $(v_c,v_d)$ edge stored in that minitree). Once we finish all the descendant minitree of the first minitree rooted at $v_i$, we come back and start exploring the second minitree (by following the second pointer from $v_i$) and so on. Thus, we need to store these intermediate pointers, in stack, to know how much progress has been made in every node's (in skeleton) list. This procedure is continued until all the nodes of $S$ are exhausted. It is clear that this procedure takes $O(n)$ time as there are $O(n/\lg n)$ nodes in $S$ and for each node, we spend $O(\lg n)$ time. Also, we need $O(n)$ bits (as there could be $O(n/\lg n)$ pointers) of intermediate space for the execution of the DFS. 

To answer 4, first note that, in any minitree $M$, if there is no egde going out (i.e., no $(v_c,v_d)$ type edge), then the DFIs inside $M$ are consecutive, i.e., in general, first child of root inside $M$ has the smallest DFI and the rightmost leaf in $M$ has the maximum DFI, and the numbers are consecutive. Otherwise, DFIs are consecutive from the root of $M$ to the DFI of $v_c$, then there is a jump of DFI by the size of the subtree rooted at $v_c$ in the DFS tree, then it's consecutive DFI again until the rightmost leaf (which has the largest DFI inside $M$) of $M$. Thus, the range of DFIs of the vertices inside any arbitrary minitree $M$ can be broken into at most two disjoint consecutive intervals. We store these (at most $O(n/\lg n)$) intervals in an interval tree along with augmenting it with the last child of $M$ inside $M$. Now, given $i$, we first find the interval where $i$ belongs to from the tree and simultaneously retrieve the last child, say $v_a$, of the corresponding minitree, all using $O(\lg n)$ time. Then, we use the information from $R$ and $C$ array corresponding to $v_a$ to invoke Lemma~\ref{reconstruct}, and retrieve the desired vertex with DFI $i$ using $O(\lg n)$ overall time. This completes the description of the query algorithms for undirected graphs. 

We can handle directed graphs similarly except a few changes in the data structures. Recall that, for directed graphs, every vertex $v_i$ has access to its in-neighbors array as well as out-neighbors array, and additionally we create two unary degree sequence arrays (each of size $O(m+n)$ bits), $D_1$ for the out-neighbors and $D_2$ for the in-neighbors. We also have two separate arrays, say $E_1$ (having one-to-one map with $D_1$), for marking child of every node and $E_2$ (having one-to-one map with $D_2$) where parents are marked. It is easy to see that almost in a similar fashion as in the undirected case, we can correctly mark, for any node $v_i$, the children of $v_i$ in $E_1$ array and parent of $v_i$ in $E_2$ array using both the $D_1$ and $D_2$ arrays while performing DFS of $G$. The second preprocessing step doesn't require any changes for the  directed graphs. Now reporting queries also can be suitably modified to make use of these changes without affecting the asymptotic running time of the query algorithms. Basically the only change that takes place is as follows, whenever we need to find the parent of a node, now we need to use the in-neighbor array whereas finding children can be handled by consulting out-neighbor array along with the mapping with their respective unary degree sequence array. We omit the details. Thus, we obtain the following,
%This concludes the proof of Theorem~\ref{main_theorem}.

%Thus, we obtain the following,
\begin{theorem}\label{main_theorem}
Given any undirected or directed graph $G$, there exists an $O(m+n)$ time and $O(n \lg n)$ bits preprocessing algorithm which outputs a data structure of size $O(m+n)$ bits, using which the queries 1(a), 1(b), 2(d) and 4 can be reported in $O(\lg n)$ time, 2(a) and 2(b) in $O(1)$ time, 2(c) can be answered in time proportional to the number of solutions, and finally 3 can be solved in $O(n)$ time respectively for the \DFS\ problem.
\end{theorem}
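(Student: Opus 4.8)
The plan is simply to assemble the two preprocessing lemmas of this section together with the query procedures already described, and then to record the minor modifications needed in the directed case. First I would invoke the first lemma of this section to build, in $O(m+n)$ time and $O(n \lg n)$ bits of working space, the unary degree sequence array $D$, the child array $E$, the parent array $P$, and the auxiliary array $D_T$, each occupying $O(m+n)$ bits, equipping $D$, $E$, $P$ (and $D_T$) with the constant-time rank/select structures of Theorem~\ref{staticrs}. Next I would invoke the second lemma to run the Farzan--Munro decomposition (Theorem~\ref{thm:tree-decomposition}) with $L = \lg n$ on the DFS tree $T$, producing the skeleton $S$ together with the TC representation (the bitvectors $R$, $Z$, $A$, $L$ with rank/select, the array $C$, the array $F$, and the $O(1)$-time level-ancestor structure $LA$ on the $O(n/\lg n)$ minitree roots), all of which fit in $O(n)$ bits and are built in $O(n)$ additional time within $O(n \lg n)$ bits of space. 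Summing the two parts, the output data structure occupies $O(m+n) + O(n) = O(m+n)$ bits, while the whole preprocessing runs in $O(m+n)$ time using $O(n \lg n)$ bits.

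Then I would verify the query bounds one at a time, citing the procedures given above: queries 2(a) and 2(b) are answered with a constant number of $select_1$ calls on $P$, $D$ and $D_T$; query 2(c) enumerates the children of $v_i$ in $O(1)$ time apiece using $E$, $D_T$ and $D$; query 2(d) costs $O(1)$ when $v_i$ is a minitree root (it is $F[rank_1(A,i)]$) and otherwise $O(\lg n)$ by Lemma~\ref{reconstruct}; query 1(a) is two invocations of 2(d), hence $O(\lg n)$; query 1(b) splits into the three cases (both in the same minitree; both being minitree roots; both in distinct minitrees but neither a root), and in each case it reconstructs at most two minitrees or issues an $LA$ query, hence $O(\lg n)$; query 3 performs a DFS over the $O(n/\lg n)$ skeleton nodes, spending $O(\lg n)$ per minitree via Lemma~\ref{reconstruct} and stacking $O(n)$ bits of progress pointers, for $O(n)$ total; and query 4 locates $i$ in the augmented interval tree built over the $O(n/\lg n)$ DFI-intervals (at most two consecutive runs per minitree) and then applies Lemma~\ref{reconstruct}, for $O(\lg n)$.

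Finally I would treat directed graphs by replacing $D$ with the pair $D_1, D_2$ (out-degree and in-degree unary sequences) and $E$, $P$ with $E_1$, $E_2$, noting that during the DFS we mark children in $E_1$ via $D_1$ and the parent in $E_2$ via $D_2$, that subsequent queries consult the out-neighbor array for children and the in-neighbor array for the parent, and that Step~2 is completely unchanged; none of the asymptotic bounds move, which gives the stated theorem.

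The step I expect to be the most delicate is not a single calculation but the DFI bookkeeping inside the use of Lemma~\ref{reconstruct}: since a minitree may contain the tail $v_c$ of the unique edge $(v_c, v_d)$ leaving it, the DFIs inside a minitree form up to two consecutive runs separated by a gap equal to the size of the subtree rooted at $v_c$ in $T$, so both query 2(d) and query 4 must stitch these runs together using exactly the six items stored in $C$ (the root $v_r$, the adjacency-array position of the first in-minitree child of $v_r$, that child's DFI, the out-edge $(v_c,v_d)$, the subtree size at $v_c$, and the depth of $v_r$). Getting that case split right, and confirming that it is precisely the ``at most one child minitree'' guarantee of Theorem~\ref{thm:tree-decomposition} that keeps the skeleton DFS of query 3 linear and lets the reconstruction in query 1(b) touch only the path from $v_c$ up to the minitree root, is where the argument needs the most care.
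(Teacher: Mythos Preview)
Your proposal is correct and follows essentially the same approach as the paper: you assemble the two preprocessing lemmas (the $D,E,P,D_T$ arrays from Step~1 and the tree-cover skeleton with the TC representation from Step~2), then dispatch each query exactly as the paper does, including the interval-tree trick for query~4 and the $D_1,D_2,E_1,E_2$ replacement for directed graphs. Your identification of the delicate point---that the DFIs inside a minitree split into at most two consecutive runs because of the unique outgoing edge $(v_c,v_d)$, and that Lemma~\ref{reconstruct} together with the six fields stored in $C$ handles this---is precisely the crux the paper relies on.
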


Note that, if the given input graph is sparse (i.e., $m=O(n)$), then both unary degree sequence array ($D$), and parent and child arrays ($E, P$) take $O(n)$ bits, and every other data structure anyway takes $O(n)$ in total, thus, we obtain the result mentioned in Theorem~\ref{sparsecase} for the case of sparse graphs. When the input graph is dense (i.e., $m=\omega(n)$), we compress the $D, E, P$ arrays using Theorem~\ref{sparse}. Note that we use only $select_1$ queries on compressed
arrays and thus query time complexity on the arrays
is still constant. Hence, we obtain the result of Theorem~\ref{sparsecase} for the dense graph case. It is worth mentioning that except the case for very dense graphs, our space bound {\it always} beats the space bound of the naive algorithm for every edge density in the full spectrum, albeit with super-constant query time. Thus, when the graph is sufficiently dense, it is better to use the standard solution which uses $O(n \lg n)$ bits with constant query time. This completes the description of our algorithms in the indexing model, and hence, the proof of Theorem~\ref{sparsecase}.

\section{Algorithms in the Encoding Model}\label{encoding}
Recall that in the encoding model, we seek to build a data structure {\it encod}
after preprocessing the input graph $G$ such that queries have to be answered using {\it encod} only, {\it without} accessing $G$. To this end, we first provide a lower bound for the space needed for {\it encod} to answer queries of the \DFS\ problem.

\subsection{Space lower-bound}
Observe that, in order to correctly answer the queries, the data structure {\it encod} must contain the information regarding the topology of the DFS tree $T$ of the graph $G$ along with the labels of the vertices of $T$ as, unlike the indexing model, we don't have the access to $G$ during the query time. It's easy to see that we need $\Omega(n \lg n)$ bits to store the vertex labels mappings. In what follows, we give a proof for the space needed to store the topology of the DFS tree by counting the number of such trees in any arbitrary graph $G$.
%we give a simple proof on a lower-bound of the size of data structures which answer queries of the DFS-indexing problem correctly.

\begin{lemma}
For a graph with $n$ vertices and $m$ edges, the size of data structures
for storing the topology of the DFS trees is $\Omega\left(n \lg \frac{m}{n}\right)$ bits.
\end{lemma}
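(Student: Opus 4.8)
The plan is to exhibit, for every valid pair $(n,m)$, a single graph $G$ whose number of distinct DFS trees (counting only tree topology, with a fixed root) is at least $2^{\Omega(n \lg (m/n))}$; then an information-theoretic argument shows any encoding of the topology must use $\Omega(n \lg (m/n))$ bits. I would take $G$ to be a ``caterpillar-like'' or ``layered'' graph built so that the DFS has many independent choices. Concretely, set $k = \lceil m/n \rceil$ and partition the $n$ vertices into roughly $n/k$ groups, arranging the graph so that within each group the DFS traversal has $k$ essentially free choices of which vertex to descend into next, each choice leading to a genuinely different tree shape; since the choices across the $n/k$ groups are independent, the number of realizable DFS trees is at least $k^{\Omega(n/k) \cdot \text{(something)}}$ — I would calibrate the construction so the exponent works out to $\Omega(n)$, giving $k^{\Omega(n)} = 2^{\Omega(n \lg k)} = 2^{\Omega(n \lg(m/n))}$ trees. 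A clean choice is a graph that looks like a path of ``gadgets'', where gadget $i$ is a small bipartite-ish block on $O(k)$ vertices with $O(k^2)$ — capped so the total edge count stays $\Theta(m)$ — internal edges, and the adjacency structure of each gadget admits $\Omega(k)$ different DFS continuations, each producing a distinct subtree; $\Theta(n/k)$ such gadgets then contribute a factor $k^{\Theta(n/k)}$, so I would instead make each gadget contribute $2^{\Theta(k)}$ distinct local shapes (e.g. by letting the DFS choose an arbitrary ordering of $\Theta(k)$ leaves hanging off a spine vertex), yielding $2^{\Theta(k) \cdot \Theta(n/k)} = 2^{\Theta(n)}$ — which is too weak. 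The right balance: each gadget on $\Theta(k)$ vertices should admit $\Theta(k!) = 2^{\Theta(k \lg k)}$ DFS orderings, and with $\Theta(n/k)$ gadgets this gives $2^{\Theta((n/k) \cdot k \lg k)} = 2^{\Theta(n \lg k)} = 2^{\Theta(n \lg(m/n))}$, as desired.

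The key steps, in order, are: (1) fix $k = \Theta(m/n)$ and describe the gadget — a spine vertex $s_i$ joined to a pool of $\Theta(k)$ vertices, plus enough connecting edges so $G$ is connected and has $\Theta(m)$ edges total and each gadget is entered at a designated vertex; (2) verify that a lex-DFS (equivalently, for the counting purpose, the set of all DFS trees reachable under \emph{some} adjacency ordering, but since we fixed lex-DFS I must argue via choosing the adjacency orderings) can realize $\Omega(k!)$ distinct topologies per gadget, independently across gadgets, hence $\prod_i \Omega(k!) = 2^{\Omega(n \lg(m/n))}$ total; (3) conclude: since distinct DFS trees require distinct encodings, and there are $N \ge 2^{\Omega(n\lg(m/n))}$ of them, any data structure storing the topology needs $\ge \lg N = \Omega(n \lg(m/n))$ bits. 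Because we are in the regime $m = \omega(n)$ (the dense case — for sparse graphs $\lg(m/n) = O(1)$ and the bound is the trivial $\Omega(n)$), the quantity $\lg(m/n)$ is meaningful and the count is non-trivial.

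The main obstacle I anticipate is step (2): making precise that the gadgets' DFS behaviors are genuinely \emph{independent}, so the counts multiply rather than merely add, while simultaneously keeping the edge budget at $\Theta(m)$ and keeping the graph connected with a well-defined root. One has to ensure that once DFS enters a gadget it is forced to finish that gadget before moving on (no back-and-forth that would couple the choices), which typically requires the inter-gadget connections to be ``thin'' (a single bridging edge between consecutive spine vertices) and the intra-gadget pool vertices to have no edges leaving the gadget. A secondary subtlety is that we want the \emph{tree topology} count, not the labeled count (labels already cost $\Omega(n \lg n)$ separately), so I must check that the $\Omega(k!)$ orderings within a gadget yield $\Omega(k!)$ distinct \emph{unlabeled} shapes — which is why the pool vertices should themselves carry distinguishing structure (e.g. each pool vertex $p_j$ in gadget $i$ has its own private pendant path of length $j$), so that permuting them changes the shape, not just the labels. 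With that refinement the argument is a routine counting bound; the construction details of the gadget are where the care is needed.
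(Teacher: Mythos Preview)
Your overall plan---exhibit one $n$-vertex, $m$-edge graph admitting $2^{\Omega(n\lg(m/n))}$ distinct DFS trees and invoke an information-theoretic lower bound---matches the paper's strategy, but the paper's construction is dramatically simpler than your gadget chain. The paper takes a root $r$, a hub set $\{u_1,\dots,u_k\}$ with $k=m/n$, and $n-k-1$ further vertices $v_1,\dots,v_{n-k-1}$, connecting $r$ to every $u_i$ and every $v_j$ to every $u_i$. Each $v_j$ then has $k$ candidate parents, giving $k^{n-k-1}$ distinct spanning trees; since $m<n^2/2$ forces $k<n/2$, the exponent $n-k-1$ is $\Theta(n)$ and taking logarithms gives $\Omega(n\lg(m/n))$ immediately. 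No chain of gadgets, no independence argument to verify, no iteration over failed parameter settings.

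Your detour into \emph{unlabeled} tree topology is both unnecessary and, as you implement it, fatal to the bound. It is unnecessary because the queries in the \DFS\ problem (parent of $v_i$, DFI of $v_i$, \ldots) are label-sensitive, so any encoding must already distinguish labeled trees; the paper simply counts those, and its separation of ``topology'' from ``labels'' in the surrounding prose is informal. It is fatal because attaching a private pendant path of length $j$ to the $j$-th of $k$ pool vertices blows each gadget up from $\Theta(k)$ to $1+2+\cdots+k=\Theta(k^2)$ vertices. You then fit only $\Theta(n/k^2)$ gadgets into $n$ vertices, and the count collapses to $(k!)^{\Theta(n/k^2)}=2^{\Theta((n/k)\lg k)}$, i.e.\ a lower bound of only $\Omega\!\left(\tfrac{n}{k}\lg k\right)$ bits; this is $o(n)$ whenever $k=m/n\to\infty$, far short of the target $\Omega(n\lg k)$. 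Dropping the unlabeled-shape requirement and counting labeled trees, as the paper does, removes the need for any decoration and lets a single bipartite block do all the work.
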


\begin{proof}
Let us consider the following graph $G$ with $n$ vertices and $m$ edges ($m < n^2/2$).
It has a vertex $r$, $k = m/n$ vertices $u_1,\ldots,u_k$,
and $n-k-1$ vertices $v_1,\ldots, v_{n-k-1}$.
The vertex $r$ is connected to all $u_i$, and each $v_j$ is also connected to
all $u_i$.  To construct a spanning tree of $G$, we choose one edge among all
$k$ edges connected to each $v_j$.  Then the number of different spanning trees of $G$ is at least $k^{n-k-1}$, and for all different spanning trees the set of DFI's are different. Therefore the size of data structure must be
at least $\lg k^{n-k-1}$ bits, which is~$\Omega\left(n \lg \frac{m}{n}\right)$.
\end{proof}
Thus, the space lower bound for {\it encod} is $\Omega(\textit{max}\{n \lg \frac{m}{n}, n \lg n\})$ bits, which is $\Omega(n \lg n)$ bits as mentioned in Theorem~\ref{encoding_lower_bound}. In what follows, we complement the above claim by providing a simple indexing structure which asymptotically matches this lower bound.
% in the encoding model for the \DFS\ problem.

\subsection{Upper-bound}
{\bf Preprocessing Step.} Our index for the \DFS\ problem consists of two components which we prepare during the preprocessing step. In the first component, we store, for every vertex $v_i$, DFI($v_i$) as permutation using the structure of Theorem~\ref{perm} of Section~\ref{prelim}. Secondly, we encode the DFS tree succinctly using the structure of Theorem~\ref{succ_tree} of Section~\ref{prelim}.

{\bf Query Algorithm.} We answer the queries using the two above mentioned structures as follows. To answer 2(d), we just use $\pi(i)$. 
Similarly, 4 can be answered by invoking $\pi^{-1}(i)$. We report $v_i$ (resp. $v_j$) as the answer for query 1(a) if $\pi(i) < \pi(j)$ (resp. $\pi(i) > \pi(j)$). We enumerate the vertex ordering as traversed in the DFS order by invoking $\pi^{-1}(1)$, then $\pi^{-1}(2)$, and so on till $\pi^{-1}(n)$. We answer 1(b) in affirmative by checking if $LA(v_j,depth(v_i))$ matches with $v_i$, otherwise no. To answer 2(a), we return $LA(v_i,depth(v_i)-1)$. We return the answer of 2(b) by using the query $degree(v_i)$. Finally, we enumerate the children of a node $v_i$ as requested in query 2(c) by using the query $child(v_i,1)$ till $child(v_i,degree(v_i))$. Hence we obtain the results mentioned in Theorem~\ref{encoding_lower_bound}.

\section{Conclusion}\label{conclude}
In this paper, we provided procedures for compactly storing the DFS tree for any graph with efficiently supporting various queries in the indexing and encoding models, and showed how to extend these techniques to design indexing schemes for other fundamental and basic graph problems. With some work, our algorithm can be extended for indexing BFS tree (and other graph search tree also) as well while supporting similar types of queries. Also, as mentioned previously, our results are more general, and can be used in other situations as well.

This work opens up many possible future directions to explore. Can we further improve the query time while keeping the space bound same in the indexing model? Can we prove a space lower bound in the indexing model? Can we design compact data structures for indexing problems like maximum flow?
%It would also be interesting to explore the possibility of designing compact data structures for other graph problems or structures like maximum flow etc.
%In this connection, we also make the following conjecture, \newline{\bf DFS Indexing Conjecture.} Any data structure for the \DFS\ problem  with constant query time must use $\Omega(n \lg n)$ bits of space.If the claim above becomes true, essentially that means that the naive solution is the best that we can hope for if we want constant query time. 
Finally, we conclude by remarking that using 
%the recent space-efficient graph algorithms
\cite{Banerjee2018,ChakrabortyRS17}, we can improve the preprocessing space of our algorithms
%all the problems considered in this paper 
to $O(n)$ bits (from $O(n \lg n)$ bits) with marginal %(at most multiplicative $O(\lg n)$) 
increment in the preprocessing time.

\bibliographystyle{plain}
\bibliography{short-bib}

\begin{thebibliography}{10}

\bibitem{Acan}
H.~Acan, S.~Chakraborty, S.~Jo, and S.~R. Satti.
\newblock Succinct data structures for families of interval graphs.
\newblock In {\em WADS}, 2019.

\bibitem{Banerjee2018}
N.~Banerjee, S.~Chakraborty, V.~Raman, and S.~R. Satti.
\newblock Space efficient linear time algorithms for {BFS}, {DFS} and
  applications.
\newblock {\em Theory of Computing Systems}, 2018.

\bibitem{BarbayAHM07}
J.~Barbay, L.~C. Aleardi, M.~He, and J.~I. Munro.
\newblock Succinct representation of labeled graphs.
\newblock In {\em 18th ISAAC}, pages 316--328, 2007.

\bibitem{BenderF04}
M.~A. Bender and M.~Farach{-}Colton.
\newblock The level ancestor problem simplified.
\newblock {\em Theor. Comput. Sci.}, 321(1):5--12, 2004.

\bibitem{BlandfordBK03}
D.~K. Blandford, G.~E. Blelloch, and I.~A. Kash.
\newblock Compact representations of separable graphs.
\newblock In {\em 14th SODA}, pages 679--688, 2003.

\bibitem{Cha_thesis}
S.~Chakraborty.
\newblock {\em Space Efficient Graph Algorithms}.
\newblock PhD thesis. The Institute of Mathematical Sciences, HBNI, India,
  2018.

\bibitem{CTW}
S.~Chakraborty, S.~Jo, and S.~R. Satti.
\newblock Improved space-efficient linear time algorithms for some classical
  graph problems.
\newblock {\em CoRR}, abs/1712.03349, 2017.

\bibitem{Chakraborty00S18}
S.~Chakraborty, A.~Mukherjee, V.~Raman, and S.~R. Satti.
\newblock A framework for in-place graph algorithms.
\newblock In {\em 26th ESA}, pages 13:1--13:16, 2018.

\bibitem{ChakrabortyRS17}
S.~Chakraborty, V.~Raman, and S.~R. Satti.
\newblock Biconnectivity, st-numbering and other applications of {DFS} using
  $\mathrm{O}(n)$ bits.
\newblock {\em J. Comput. Syst. Sci.}, 90:63--79, 2017.

\bibitem{ChakrabortyS19}
S.~Chakraborty and S.~R. Satti.
\newblock Space-efficient algorithms for maximum cardinality search, its
  applications, and variants of {BFS}.
\newblock {\em J. Comb. Optim.}, 37(2):465--481, 2019.

\bibitem{Clark96}
D.~Clark.
\newblock {\em Compact Pat Trees}.
\newblock PhD thesis. University of Waterloo, Canada, 1996.

\bibitem{CLRS}
T.~H. Cormen, C.~E. Leiserson, R.~L. Rivest, and C.~Stein.
\newblock {\em Introduction to Algorithms {(3.} ed.)}.
\newblock {MIT} Press, 2009.

\bibitem{EvenT76}
S.~Even and R.~E. Tarjan.
\newblock Computing an \emph{ st } -numbering.
\newblock {\em Theor. Comput. Sci.}, 2(3):339--344, 1976.

\bibitem{FarzanM11}
A.~Farzan and J.~I. Munro.
\newblock Succinct representation of dynamic trees.
\newblock {\em Theor. Comput. Sci.}, 412(24):2668--2678, 2011.

\bibitem{FarzanM14}
A.~Farzan and J.~I. Munro.
\newblock A uniform paradigm to succinctly encode various families of trees.
\newblock {\em Algorithmica}, 68(1):16--40, 2014.

\bibitem{FerresSG0N17}
L.~Ferres, J.~F. Sep{\'{u}}lveda, T.~Gagie, M.~He, and G.~Navarro.
\newblock Fast and compact planar embeddings.
\newblock In {\em 15th WADS}, pages 385--396, 2017.

\bibitem{HopcroftT74}
J.~E. Hopcroft and R.~E. Tarjan.
\newblock Efficient planarity testing.
\newblock {\em J. {ACM}}, 21(4):549--568, 1974.

\bibitem{KammerKL16}
F.~Kammer, D.~Kratsch, and M.~Laudahn.
\newblock Space-efficient biconnected components and recognition of outerplanar
  graphs.
\newblock In {\em 41st MFCS}, pages 56:1--56:14, 2016.

\bibitem{MunroN16}
J.~I. Munro and P.~K. Nicholson.
\newblock Compressed representations of graphs.
\newblock In {\em Encyclopedia of Algorithms}, pages 382--386. 2016.

\bibitem{MunroRRR12}
J.~I. Munro, R.~Raman, V.~Raman, and S.~S. Rao.
\newblock Succinct representations of permutations and functions.
\newblock {\em Theor. Comput. Sci.}, 438:74--88, 2012.

\bibitem{MunroR01}
J.~I. Munro and V.~Raman.
\newblock Succinct representation of balanced parentheses and static trees.
\newblock {\em {SIAM} J. Comput.}, 31(3):762--776, 2001.

\bibitem{Navarro}
G.~Navarro.
\newblock {\em Compact Data Structures - {A} Practical Approach}.
\newblock Cambridge University Press, 2016.

\bibitem{Tarjan72}
R.~E. Tarjan.
\newblock Depth-first search and linear graph algorithms.
\newblock {\em {SIAM} J. Comput.}, 1(2):146--160, 1972.

\bibitem{Tarjan741}
R.~E. Tarjan.
\newblock Finding dominators in directed graphs.
\newblock {\em {SIAM} J. Comput.}, 3(1):62--89, 1974.

\bibitem{Tarjan74}
R.~E. Tarjan.
\newblock A note on finding the bridges of a graph.
\newblock {\em Information Processing Letters}, 2(6):160--161, 1974.

\bibitem{YamanakaN10}
K.~Yamanaka and S.~Nakano.
\newblock A compact encoding of plane triangulations with efficient query
  supports.
\newblock {\em Inf. Process. Lett.}, 110(18-19):803--809, 2010.

\end{thebibliography}

\newpage
\appendix
\section{Appendix} 
\label{appendix}

\subsection{Applications}\label{app}
In this section we discuss how to design indexing structure for various graph problems using the techniques we developed earlier. More specifically, we develop indexing structures for shortest path, undirected connectivity, bi-connectivity, $2$-edge connectivity and strong connectivity in the indexing model. With some effort, they can also be easily extended for the encoding model as well. We start with indexing shortest paths first.
%But due to the lack of space, we defer all the proofs of these applications in Appendix~\ref{appendix}.
%In what follows, we mention such applications only in the indexing model but they can also be extended for the encoding model as well.

\subsection{Indexing Shortest Path}
%In this section, we show how one can design indexing structure for various algorithmic graph problems using the techniques we developed earlier. 
In what follows, we assume that, for the weighted graphs, the adjacency array also has the weights along with the neighbors.  We start by defining the following problem which we call the \SHORT\ problem.

\begin{center}
\fbox{\begin{minipage}{11cm}
\noindent{\SHORT\ problem}\\
\noindent {\em Input}: An undirected or a directed graph $G=(V,E)$ where $|V|=n$, $|E|=m$ and non-negative $O(\lg n)$ bit edge weights, and a source vertex $v_s$. Preprocess $G$ and answer the following queries:\\
\noindent{\em Queries}: 
\begin{enumerate}
\item Given any vertex $v_i$, 
\begin{enumerate}
\item Return the length of the shortest path between $v_s$ and $v_i$.
\item Return a shortest path from $v_s$ to $v_i$.
\end{enumerate}
\end{enumerate}
\end{minipage}}
\end{center}

Note that, given a source vertex $v_s$, a shortest path tree in $G$ rooted at $v_s$ is a tree that is the union, over all $v_i \in V$, of a shortest path in $G$ from $v_s$ to $v_i$, and this tree can be computed by running Dijkstra's algorithm~\cite{CLRS} on $G$. Even though we described in the previous section how one can encode the DFS tree of any graph compactly, observe that, the method (for storing the tree and querying as well) works for any arbitrary rooted tree in general. Thus, in the preprecessing step, we run the classical Dijkstra's algorithm which takes $O(m+n \lg n)$ time and $O(n \lg n)$ bits of space, and computes the shortest path tree $T_s$. We run our preprocessing algorithm on $T_s$ to store it compactly in exactly the same way as we did for the DFS tree with only one extra piece of information. With the every entry in the $F$ array (where we store the DFIs of the roots of the minitrees), we also the store the length of the shortest path from the source $v_s$ to the roots of the minitrees. Then, to answer the query of 1(a), using a similar procedure of Lemma~\ref{reconstruct}, we first reach from $v_i$ (where $v_i$ is a non minitree root node) to the root, say $v_r$, of the minitree $M$ containing $v_i$, retrieve the shortest path length between $v_r$ and $v_s$, and finally add the length of the path between $v_i$ and $v_r$ by using the parent query repeatedly along with retrieving the edge weights of all the edges in between. Note that, we can retrieve these edge weights from the adjacency array while finding the parent only. Thus, this whole process can be completed using $O(\lg n)$ time. If $v_i$ is a root of some minitree, then from $F$ array we can return the answer in $O(1)$ time. To return a shortest path from the queried node $v_i$ to the source $v_s$, we can repeatedly use the parent query from $v_i$ till we reach to $v_s$. Thus this takes time proportional to the length of the path, and this is optimal. Thus, we obtain the following,

\begin{theorem}
Given any sparse (dense, respectively) undirected or directed graph $G$, there exists an $O(m+n \lg n)$ time and $O(n \lg n)$ bits preprocessing algorithm which outputs a data structure of size $O(n)$ ($O(n \lg (m/n))$, respectively) bits, using which the query 1(a) can be reported in $O(\lg n)$ time, and 1(b) can be returned optimally in time proportional to the size of the solution respectively for the \SHORT\ problem.
\end{theorem}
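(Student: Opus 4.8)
The plan is to reduce the \SHORT\ problem to the tree-indexing machinery developed for the \DFS\ problem, exploiting the remark that this machinery stores and queries an \emph{arbitrary} labeled rooted tree, not merely a DFS tree. First I would run the classical Dijkstra's algorithm on $G$ from $v_s$; with a standard priority queue this costs $O(m + n\lg n)$ time and $O(n\lg n)$ bits, and it produces both a shortest path tree $T_s$ rooted at $v_s$ and the distance $d(v_s,v_i)$ for every vertex $v_i$. I would then feed $T_s$ into the preprocessing pipeline of Section~\ref{main_algo} (Steps 1 and 2): the unary degree sequence array $D$, the parent/child arrays $E,P$, the array $D_T$, and the tree-covering decomposition of Theorem~\ref{thm:tree-decomposition} with parameter $L=\lg n$, together with the skeleton $S$, the bitvectors $R,Z,A,F,L$, the array $C$, and the level-ancestor structure. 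By Theorem~\ref{main_theorem} this occupies $O(m+n)$ bits in general, which is $O(n)$ for sparse $G$ and, after compressing $D,E,P$ via Theorem~\ref{sparse}, $O(n\lg(m/n))$ for dense $G$; the tree preprocessing adds only $O(m+n)$ time and $O(n\lg n)$ bits on top of Dijkstra, giving the claimed totals.

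The one genuinely new ingredient is to store, alongside every entry of the $F$ array (which already records the DFI of a minitree root $v_r$), the shortest path length $d(v_s,v_r)$. Since there are $O(n/\lg n)$ minitree roots and each value uses $O(\lg n)$ bits, this contributes only $O(n)$ extra bits and is absorbed into both space bounds; the values come for free from Dijkstra's distance array, so preprocessing time is unchanged. To answer query 1(a) on $v_i$: if $v_i$ is itself a minitree root, read $d(v_s,v_i)$ from $F$ in $O(1)$ time; otherwise invoke the navigation of Lemma~\ref{reconstruct} to locate the minitree $M$ of $v_i$ and its root $v_r$ in $O(\lg n)$ time and read $d(v_s,v_r)$ from $F$. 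As $v_r$ is an ancestor of $v_i$ inside $M$ and $|M|=O(\lg n)$, I walk from $v_i$ up to $v_r$ by repeated parent queries; each parent query already identifies the relevant position in the current vertex's adjacency array, so I can read the weight of every traversed edge (the adjacency array stores weights for weighted graphs) and accumulate them, obtaining $d(v_s,v_i)$ in $O(\lg n)$ total time. For query 1(b), I emit $v_i$ and then repeatedly apply the parent query until reaching $v_s$; this prints the $T_s$-path from $v_i$ to $v_s$, which is a shortest path by the defining property of the shortest path tree, in time proportional to its length, which is optimal.

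The main obstacle — modest in this case — is the bookkeeping inside the minitree: I must verify that the parent-query primitive both returns the parent and exposes the adjacency-array position of the connecting edge, so that its weight is recoverable in $O(1)$ time, and that the within-minitree ancestor walk from $v_i$ to $v_r$ never leaves $M$ prematurely. Both follow from the structural guarantees already used in Lemma~\ref{reconstruct}: a minitree has at most one outgoing edge, which is marked in $Z$, and $|M|=O(\lg n)$. Everything else is a direct instantiation of the earlier results, so the theorem follows.
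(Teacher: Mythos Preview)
Your proposal is correct and follows essentially the same approach as the paper: run Dijkstra, index the resulting shortest-path tree with the DFS-tree machinery, augment each $F$ entry with the distance from $v_s$ to the corresponding minitree root, and answer 1(a) by climbing to the minitree root while summing edge weights and 1(b) by iterated parent queries. The paper's argument is slightly terser but matches yours point for point, including the observation that edge weights are read off the adjacency array during the parent query.
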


\subsection{Indexing Connectivity in Undirected Graphs}
Now consider the following problem which we call the \UCON\ problem.

\begin{center}
\fbox{\begin{minipage}{11cm}
\noindent{\UCON\ problem}\\
\noindent {\em Input}: An undirected graph $G=(V,E)$ where $|V|=n$, $|E|=m$. Preprocess $G$ and answer the following query:\\
\noindent{\em Query}: Given any pair of vertices, $v_i$ and $v_j$, are they connected in $G$. 
\end{minipage}}
\end{center}

It's easy to see that, by storing the connected component number for every vertex, we can solve this query in $O(1)$ using $O(n \lg n)$ bits of space. We can optimize on space by using our technique. More specifically, if the input graph $G$ is disconnected, we do a DFS of $G$ and store the TC representation for each of the tree in the DFS forest along with an extra piece of information. With every minitree root, we also store the vertex label of the root of the tree (in the DFS forest) where the minitree belongs to. Thus, given any pair of vertices, we just need to reach to the minitree roots containing them, then if the vertex label stored at both these minitrees are same, they belong to the same DFS tree, thus, they are connected. Otherwise, they are disconnected in $G$. Thus, we can solve the required query in $O(\lg n)$ time using Lemma~\ref{reconstruct}, and the final result can be summarized below.

\begin{theorem}
Given any sparse (dense, respectively) undirected graph $G$, there exists an $O(m+n)$ time and $O(n \lg n)$ bits preprocessing algorithm which outputs a data structure of size $O(n)$ ($O(n \lg (m/n))$, respectively) bits, using which the query of the \UCON\ problem can be reported in $O(\lg n)$ time.
\end{theorem}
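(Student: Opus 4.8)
The plan is to reuse essentially the entire machinery of Section~\ref{main_algo} (the unary degree sequence array $D$, the child and parent arrays $E,P$, the array $D_T$, and the TC representation of the search tree together with its skeleton and the level-ancestor structure $LA$), adding just one extra field per minitree root. First I would run a DFS of $G$; since $G$ may be disconnected, this produces a DFS \emph{forest} whose trees are precisely the connected components of $G$. I would then apply the tree-covering decomposition of Theorem~\ref{thm:tree-decomposition} with $L=\lg n$ to this forest (either by decomposing each tree separately, or by attaching a virtual root joining the roots of all DFS trees and decomposing the resulting single tree) and build the same auxiliary bitvectors and structures as before; all of these occupy $O(n)$ bits and are built in $O(n)$ time once the DFS is done. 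The only new ingredient is that alongside each minitree root $v_r$ (in its entry of $C$, or in an array parallel to $F$ indexed via $rank_1(A,\cdot)$) I would store the vertex label of the root of the DFS tree of the forest to which that minitree belongs. Since there are $O(n/\lg n)$ minitree roots and each label is $O(\lg n)$ bits, this adds only $O(n)$ bits, so the output has size $O(n)$ bits for sparse $G$ and, after compressing $D,E,P$ with Theorem~\ref{sparse} exactly as in the proof of Theorem~\ref{sparsecase}, $O(n\lg(m/n))$ bits for dense $G$.

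For the query on a pair $v_i,v_j$ I would, for each of them, locate the root of its containing DFS tree: if the vertex is itself a minitree root (detected via the bitvector $A$), I read the stored DFS-tree-root label directly in $O(1)$ time; otherwise I invoke Lemma~\ref{reconstruct} to reach, in $O(\lg n)$ time, the root $v_r$ of the minitree containing it and then read the label stored with $v_r$. I answer ``connected'' iff the two retrieved labels coincide. Correctness is immediate: for undirected graphs the DFS forest partitions $V$ into the connected components of $G$, so $v_i$ and $v_j$ are connected in $G$ iff they lie in the same DFS tree iff they carry the same DFS-tree-root label; and a vertex shared as the root of several minitrees still lies in a single DFS tree, so the stored label is well defined. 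The total query time is $O(\lg n)$, dominated by the two applications of Lemma~\ref{reconstruct}, while preprocessing is the $O(m+n)$-time DFS followed by the $O(n)$-time construction of the TC structures, using $O(n\lg n)$ bits of working space.

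I do not expect a genuine obstacle here; the statement follows fairly directly from Lemma~\ref{reconstruct}. The two points that warrant a little care are (i) making the tree-covering decomposition and the skeleton and $LA$ structures work on a forest rather than a single tree, which is handled by the virtual-root trick or by decomposing component by component, and (ii) verifying that the extra per-minitree-root label fits within the $O(n)$-bit budget, which it does precisely because the decomposition has only $O(n/\lg n)$ minitrees. Everything else is a verbatim reuse of the indexing-model construction, and the dense-case bound comes, as before, from replacing the plain bitvectors $D,E,P$ by the compressed representation of Theorem~\ref{sparse}.
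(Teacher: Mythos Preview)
Your proposal is correct and follows essentially the same approach as the paper: build the TC representation of the DFS forest, augment each minitree root with the label of the root of its ambient DFS tree, and answer a query by reaching the minitree roots via Lemma~\ref{reconstruct} and comparing the stored labels. The paper decomposes each DFS tree separately rather than explicitly invoking the virtual-root trick here, but this is an inessential variation and your handling of the dense case via Theorem~\ref{sparse} matches as well.
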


\subsection{Indexing Strong Connectivity}
A directed graph $G$ is said to be {\it strongly connected} if for every pair of vertices $v_i$ and $v_j$ in $V$, both $v_i$ and $v_j$ are reachable from each other. If $G$ is not strongly connected, it is possible to decompose $G$ into its strongly connected components i.e. a maximal set of vertices $C \subseteq V$ such that for every pair of vertices $v_i$ and $v_j$ in $C$, both $v_i$ and $v_j$ are reachable from each other. Alternatively, if $G$ is directed and $v_i,v_j\in V$, let us write $v_i \equiv_S v_j$ if $G$ contains a path from $v_i$ to $v_j$ and one from $v_j$ to $v_i$. Then it is easy to see that $\equiv_S$ is an equivalence relation on $V$, and each subgraph induced by this equivalence class is called a strongly connected component. Now let us define the following problem which we call the \STRONG\ problem.

\begin{center}
\fbox{\begin{minipage}{11cm}
\noindent{\STRONG\ problem}\\
\noindent {\em Input}: A directed graph $G=(V,E)$ where $|V|=n$ and $|E|=m$, preprocess $G$ and answer the following queries:\\
\noindent{\em Queries}: 
\begin{enumerate}
\item Given $v_i$, return all the vertices that belong to the same strongly connected component component as $v_i$.
\item Given any pair of vertices $v_i$ and $v_j$, check if they belong to the same strongly connected component.
\item Enumerate all the strongly connected components of $G$.
\end{enumerate}
\end{minipage}}
\end{center}

In the preprocessing step, we use a standard algorithm for enumerating strongly connected components as follows.
First we do a DFS on $G$ and store finish time for each vertex using $O(n \lg n)$ bits and
mark roots of the trees in the DFS forest using a bitvector of length $n$.
Then we do a DFS again using reversed edges in decreasing order of finish time, which can be done
using in adjacency array in our graph representation, and store the DFS forest using the same data structure
as other problems.  Each tree in the resulting DFS forest corresponds to a strongly connected component.
We create a virtual root vertex which has roots of DFS trees as children.  The the DFS forest becomes
a virtual DFS tree $T$.  The number of edges increases at most $n$.
We partition the virtual DFS tree $T$ into minitrees using the tree cover algorithm, and for each minitree root,
we store the root node of the DFS tree containing the minitree root using $O(n)$ bits.
%We also usr a bitvector of length $n$ to mark the root of a DFS tree which has a minitree root inside it.

Queries are done as follows.  Query 3 is easily solved by finding ones in the bitvector marking
the roots of the DFS trees using {\it select} queries.
For query 1, given a vertex $v_i$, we first climb up the DFS tree
until we hit the root.  Then we do a DFS to enumerate all the vertices in the DFS tree
in time proportional to the tree size.  Because the set of vertices in the tree coincides the strongly connected
component containing $v_i$, we can correctly answer the query.
For query 2, first we climb up the DFS tree from $v_i$ and $v_j$ until we hit a minitree root or
the root of the DFS tree.  If we hit the minitree root, we can obtain the root of the DFS tree.
Therefore we can reach the root of the DFS tree having vertices $v_i$ and $v_j$ in $O(\lg n)$ time.
Then it is easy to check if they belong to the same strongly connected component in constant time
by just comparing the ID's of the roots.

\begin{theorem}
Given any sparse (dense, respectively) directed graph $G$, there exists an $O(m+n)$ time and $O(n \lg n)$ bits preprocessing algorithm which outputs a data structure of size $O(n)$ ($O(n \lg (m/n))$, respectively) bits, using which the query 1 can be answered in time proportional to the size of the solution, 2 can be answered in $O(\lg n)$ time, and finally 3 can be returned optimally in time proportional to the size of the solution.
\end{theorem}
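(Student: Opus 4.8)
The plan is to instantiate the general machinery of Theorem~\ref{main_theorem} on a suitably modified directed graph, so that the strongly connected components of $G$ become exactly the vertex sets of the trees in a DFS forest, and then augment the TC representation with a small amount of extra data. First I would recall the classical Kosaraju--Sharir two-pass algorithm: perform a DFS on $G$ recording finish times, then perform a second DFS on the reverse graph $G^R$, processing vertices in decreasing order of finish time; each tree of the resulting DFS forest is precisely one strongly connected component. The first pass needs $O(n\lg n)$ bits for the finish-time array and a length-$n$ bitvector, which fits the preprocessing budget; the reversed adjacency access is free since our directed-graph representation already stores in-neighbour arrays. The key structural trick is to add a virtual root $r^\ast$ whose children are the roots of the DFS-forest trees, turning the forest into a single virtual DFS tree $T$; this adds at most $n$ edges, so the size bounds $O(n)$ (sparse) / $O(n\lg(m/n))$ (dense) are preserved, and we may now apply the tree-covering decomposition of Theorem~\ref{thm:tree-decomposition} with $L=\lg n$ exactly as in Section~\ref{main_algo}.

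Next I would describe the one extra ingredient stored in the TC representation: with every minitree root we record the label of the root of the DFS tree (within $T$) that contains it; since there are $O(n/\lg n)$ minitree roots and each label is $O(\lg n)$ bits, this costs only $O(n)$ bits. I keep the length-$n$ bitvector marking the DFS-tree roots, with rank/select support from Theorem~\ref{staticrs}, costing $o(n)$ extra bits. Everything else (the $D,E,P,D_T$ arrays, skeleton $S$, $LA$ structure, arrays $R,C,Z,A,F,L$) is as before, and the construction time is $O(m+n)$ plus the $O(n)$ time of the decomposition.

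Then I would handle the three queries. Query~3 is immediate: enumerate the $1$s of the root-marking bitvector by repeated $select_1$, each giving a representative of one component — time proportional to the number of components, hence to the output size. Query~1, given $v_i$: climb to the root of $v_i$'s DFS tree — using the parent query (constant time per step) until we reach a minitree root, from whose $C$-entry we read the stored DFS-tree-root label, so the climb costs $O(\lg n)$ — then run a DFS inside that tree using Lemma~\ref{reconstruct} to reconstruct each minitree in time linear in its size; since the vertex set of the tree is exactly the component of $v_i$, the total time is proportional to the component size, i.e.\ to the output. Query~2, given $v_i,v_j$: climb from each to the containing DFS-tree root as in query~1 in $O(\lg n)$ time, and return yes iff the two root labels coincide; this is correct because two vertices are in the same strong component iff they lie in the same tree of the second-pass forest. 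Assembling these gives the claimed bounds.

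The main obstacle I anticipate is not the query algorithms — those are light adaptations of Lemma~\ref{reconstruct} — but making the reduction airtight: one must verify that introducing the virtual root and reindexing by DFI does not disturb the invariants the tree-covering representation relies on (in particular that the virtual tree is a legitimate rooted ordered tree whose DFS forest structure is faithfully captured), and that the $+n$ blow-up in edge count genuinely leaves the dense-case bound $O(n\lg(m/n))$ unchanged when $m=\omega(n)$. I would also be careful that "climbing until we hit a minitree root \emph{or} the root of the DFS tree" terminates correctly in the degenerate cases where $v_i$ is itself a minitree root or a DFS-tree root, which the argument must spell out. Once these bookkeeping points are checked, the theorem follows.
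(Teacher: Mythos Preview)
Your proposal is correct and follows essentially the same route as the paper: Kosaraju--Sharir to make each DFS tree of the second pass a strongly connected component, a virtual root to turn the forest into a single tree, the tree-covering decomposition of Section~\ref{main_algo}, a bitvector marking DFS-tree roots, and at each minitree root the label of its containing DFS-tree root. Your query algorithms (select on the root bitvector for query~3, climb to the minitree root then read the stored DFS-tree-root label for queries~1 and~2) and your bookkeeping caveats match the paper's argument almost verbatim.
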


\subsection{Indexing Biconnectivity and $2$-Edge Connecitivity}
Before starting with the next application, let us briefly recollect all the necessary graph theoretic definitions that will be used subsequently. A cut vertex in an undirected graph $G$ is a vertex $v$ that when removed (along with its incident edges) from a graph creates more components than previously in the graph. A (connected) graph with at least three vertices is biconnected (also called $2$-connected in the graph literature sometimes) if and only if it has no cut vertex. A biconnected component is a maximal biconnected subgraph. These components are attached to each other at cut vertices. Similarly in an undirected graph $G$, a bridge (or cut edge) is an edge that when removed (without removing the vertices) from a graph creates more components than previously in the graph. A (connected) graph with at least two vertices is $2$-edge-connected (also called bridgeless sometimes) if and only if it has no bridge. A $2$-edge connected component is a maximal $2$-edge connected subgraph. Alternatively, let $G$ be an undirected graph, and $e_1,e_2 \in E$, the we write $e_1 \equiv_B e_2$ ($e_1 \equiv_E e_2$, respectively) if $e_1=e_2$ or $e_1$ and $e_2$ belong to a common simple cycle (not necessarily simple cycle, respectively) in $G$. Then $\equiv_B$ and $\equiv_E$ are equivalence relations on $E$. Each subgraph induced by an equivalence class of one of these relations is called a biconnected component in the case of $\equiv_B$, and a $2$-edge connected component in the case of $\equiv_E$. In the light of above the definitions, let us define the following problem which we call the \BICON\ problem.

\begin{center}
\fbox{\begin{minipage}{11cm}
\noindent{\BICON\ problem}\\
\noindent {\em Input}: An undirected graph $G=(V,E)$ where $|V|=n$ and $|E|=m$. Preprocess $G$ and answer the following queries:\\
\noindent{\em Queries}: 
\begin{enumerate}
\item Given $v_i$, check if $v_i$ is a cut vertex of $G$.
%\begin{enumerate}
%\item 
\item Given an edge $(v_i,v_j)$, return all the edges that belong to the same biconnected component as the edge $(v_i,v_j)$.
%\end{enumerate}
\item Given any pair of edges $e_i=(v_a,v_b)$ and $e_j=(v_c,v_d)$, check if both of them belong to the same biconnected component.
\item Enumerate all the cut vertices of $G$.
\end{enumerate}
\end{minipage}}
\end{center}

A similar problem is also studied in~\cite{ChakrabortyRS17,KammerKL16} but in a slightly different setting. More specifically, in~\cite{ChakrabortyRS17,KammerKL16} no preprocessing is allowed. Towards solving the \BICON\ problem, in the preprocessing step, we run Tarjan's~\cite{Tarjan72}   classical biconnectivity algorithm (which takes $O(m+n)$ time and $O(n \lg n)$ bits of space), and in a bit vector $H$ mark all the cut vertices. Then, given any vertex $v_i$, we can check if it is a cut vertex in $O(1)$ time from $H$ for answering query 1. Similary for query 4, using {\it select} query on $H$, we can enumerate all the cut vertices in optimal $O(t)$ time, if there are $t$ cut vertices in $G$. Finally, it is a routine task to peel off the biconnected components by traversing the DFS tree while avoiding the cut vertices (which are explicitly stored in $H$). Thus, we can answer the query 2 in time proportional to the size of the biconnected component where the edge $(v_i,v_j)$ belongs to. Similarly, the query 3 can be answered in time $O(max(|B_i|, |B_j|))$ where $|B_i|$ ($|B_j|$, respectively) is the size of the biconnected component where the edge $e_i$ ($e_j$, respectively) belongs to. Thus, we obtain the following, 

\begin{theorem}
Given any sparse (dense, respectively) undirected graph $G$, there exists an $O(m+n)$ time and $O(n \lg n)$ bits preprocessing algorithm which outputs a data structure of size $O(n)$ ($O(n \lg (m/n))$, respectively) bits, using which the query 1 can be reported in $O(1)$ time, 2 can be answered in time proportional to the size of the solution, 3 can be answered in time proportional to the maximum size of the biconnected components containing the input edges, and finally 4 can be returned optimally in time proportional to the size of the solution respectively for the \BICON\ problem.
\end{theorem}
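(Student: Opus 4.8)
The plan is to treat the compact DFS-tree representation of Section~\ref{main_algo} essentially as a black box and to add to it a single bitvector recording the cut vertices of $G$. In the preprocessing step we run Tarjan's classical biconnectivity algorithm~\cite{Tarjan72} on $G$; this costs $O(m+n)$ time and $O(n\lg n)$ bits, and as a by-product of its ${\rm low}$-value computation on the DFS tree it identifies every cut vertex of $G$. We store these in a bitvector $H[1..n]$, with $H[i]=1$ iff $v_i$ is a cut vertex, and attach to $H$ the constant-time rank/select structure of Theorem~\ref{staticrs}. Alongside $H$ we build, on the very DFS tree produced by Tarjan's run, the navigation structures of Section~\ref{main_algo} (the unary-degree-sequence array, the parent and child arrays, $D_T$, and -- if DFI access is wanted -- the TC representation). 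The resulting index occupies $O(m+n)$ bits in general, hence $O(n)$ bits when $G$ is sparse and, after compressing $D,E,P$ with Theorem~\ref{sparse}, $O(n\lg(m/n))$ bits when $G$ is dense, matching the stated bounds; the preprocessing time and working-space bounds are inherited from Section~\ref{main_algo}.

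Queries~1 and~4 are then immediate. Query~1 (``is $v_i$ a cut vertex?'') is the single lookup $H[i]$, answered in $O(1)$ time. Query~4 enumerates the cut vertices by repeated $select_1$ on $H$, spending $O(1)$ per reported vertex and therefore $O(t)$ time in total when $G$ has $t$ cut vertices, which is optimal.

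For queries~2 and~3 the idea is to peel off one biconnected component at a time by a DFS-tree traversal that is forbidden to pass through a cut vertex. The underlying structural fact is that, with respect to the (lex-)DFS tree $T$, each biconnected component of a connected undirected graph occupies a connected region of $T$ whose boundary is made of cut vertices: a non-cut tree node together with the tree edge to its parent belongs to a unique biconnected component, and the only event that switches components as one moves through $T$ is crossing a vertex flagged in $H$. Given the query edge $(v_i,v_j)$ we first locate it in $T$ -- it is either a tree edge, detected from the child array $E$, or a back edge -- so that its deeper endpoint $w$ in $T$ pins down the component $B$ to be output. We then run a DFS from $w$, descending to children and climbing to the parent by the $O(1)$-time tree queries of Section~\ref{main_algo}, never stepping through an $H$-marked node, and at each visited tree node $v$ we additionally scan the adjacency array of $v$ in $G$ to emit the back edges of $B$ incident to $v$ (each such edge being reported once, from its deeper endpoint). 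Since the walk visits only the tree nodes and incident edges of $B$, query~2 runs in time proportional to $|B|$, i.e.\ to the size of its output. Query~3 runs the same peeling procedure from $e_i$ and from $e_j$ (interleaved, or sequentially with a membership test on the smaller set), stopping as soon as the two edge sets are seen to coincide or to be disjoint, and so costs $O(\max(|B_i|,|B_j|))$ time. Combining the four bounds gives the theorem.

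The step I expect to be the real obstacle is making the ``the traversal never escapes $B$, and misses nothing'' claim fully rigorous for lex-DFS trees: one must argue that $H$ together with the tree structure is exactly the right stopping information, so that the walk both stays inside $B$ and, conversely, reaches all of $B$, and that every back edge of $B$ hangs off some tree node of $B$ and is output exactly once. The remaining low-level points -- single-edge components (bridges), whether the root of $T$ is a cut vertex, and running the whole scheme per tree of the DFS forest when $G$ is disconnected -- are routine but should be stated explicitly.
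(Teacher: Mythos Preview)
Your proposal is correct and follows essentially the same approach as the paper: run Tarjan's algorithm, record the cut vertices in a bitvector $H$ with rank/select on top, answer queries~1 and~4 directly from $H$, and answer queries~2 and~3 by peeling off a biconnected component via a DFS-tree traversal that does not cross cut vertices. If anything, you are more explicit than the paper's own argument (which simply calls the peeling step ``routine''), and your identification of the boundary-handling at cut vertices as the point requiring care is apt.
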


Similar to the \BICON\ problem, we also define the \EDGE\ problem in the following way.

\begin{center}
\fbox{\begin{minipage}{11cm}
\noindent{\EDGE\ problem}\\
\noindent {\em Input}: An undirected graph $G=(V,E)$ where $|V|=n$ and $|E|=m$, preprocess $G$ and answer the following queries:\\
\noindent{\em Queries}: 
\begin{enumerate}
\item Given an edge $(v_i,v_j)$, 
\begin{enumerate}
\item check if it is a bridge of $G$.
\item return all the edges that belong to the same $2$-edge-connected component as the edge $(v_i,v_j)$.
\end{enumerate}
\item Given any pair of edges $e_i=(v_a,v_b)$ and $e_j=(v_c,v_d)$, check if both of them belong to the same $2$-edge-connected component.
\item Enumerate all the bridges of $G$.
\end{enumerate}
\end{minipage}}
\end{center}

We can return the queries of the \EDGE\ problem almost in an analogous manner. For this, first we note that only the tree edges of the DFS tree could be bridges, thus, we store in an array, say $Y$, all the possible bridges of $G$, and $Y$ is one-to-one correspondence with the unary degree sequence array, child array and the parent array of the DFS tree. Then checking if the edge $(v_i,v_j)$ is a bridge can be done in $O(1)$ time using the {\it select} query. Similarly, enumerating all the bridges can be performed in optimal $O(t)$ time, if there are $t$ bridges in $G$. Also, by running another DFS and explicitly avoiding the bridges, we can peel off the $2$-edge-connected component which contains the edge $(v_i,v_j)$ in time proportional to its size. Finally, we can return the answer of query 2 by first generating the $2$-edge connected component containing $e_i$ and then checking whether $e_j$ belongs there, thus it will take time $O(max(|C_i|, |C_j|))$ where $|C_i|$ ($|C_j|$, respectively) is the size of the $2$-edge connected component where the edge $e_i$ ($e_j$, respectively) belongs to. We can summarize all the results in the following theorem.

\begin{theorem}
Given any sparse (dense, respectively) undirected graph $G$, there exists an $O(m+n)$ time and $O(n \lg n)$ bits preprocessing algorithm which outputs a data structure of size $O(n)$ ($O(n \lg (m/n))$, respectively) bits, using which the query 1(a) can be reported in $O(1)$ time, 1(b) can be answered in time proportional to the size of the solution, 2 can be answered in time proportional to the maximum size of the $2$-edge connected components containing the input edges, and finally 3 can be returned optimally in time proportional to the size of the solution respectively for the \EDGE\ problem.
\end{theorem}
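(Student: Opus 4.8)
The plan is to mirror the treatment of the \BICON\ problem, with \emph{bridges} playing the role of cut vertices and \emph{$2$-edge-connected components} the role of biconnected components. The structural fact I would exploit is standard: in a DFS of an undirected graph every non-tree edge is a back edge and every bridge is a tree edge, and if one deletes exactly the bridge tree-edges from the DFS forest, the resulting connected pieces are subtrees whose vertex sets are precisely the vertex sets of the $2$-edge-connected components; moreover the component carried by such a piece $S$ is exactly the set of tree edges of $S$ together with all back edges incident to a vertex of $S$, and every back edge incident to $S$ automatically has \emph{both} endpoints in $S$ because the tree path it closes contains no bridge. A bridge is the sole edge of its own component and does not sit on any piece but rather joins two of them.

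For preprocessing I would run a standard linear-time low-point bridge-finding algorithm~\cite{Tarjan74} on a DFS of $G$ in $O(m+n)$ time and $O(n\lg n)$ bits, taking the customary care with parallel edges (a pair of parallel edges forms a cycle of length two, so neither is a bridge — this requires distinguishing the actual parent edge rather than the parent vertex in the low-point computation) and with a disconnected $G$ (work per DFS tree of the forest). Alongside the arrays $D$ (unary degree sequence), $E$ (child array), $P$ (parent array) and $D_T$ produced in Step~1 of Section~\ref{main_algo}, I would store one extra bitvector $Y$, in one-to-one correspondence with $D$, marking the position of each bridge tree-edge, with $O(1)$-time rank/select support (Theorem~\ref{staticrs}). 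For dense graphs I would store $D,E,P,Y$ in the compressed form of Theorem~\ref{sparse} exactly as in the proof of Theorem~\ref{sparsecase}; since only $select_1$ is used on the compressed arrays the query times are unaffected, so the index is $O(n)$ bits for sparse $G$ and $O(n\lg(m/n))$ bits for dense $G$. (The minitree/tree-cover machinery of Section~\ref{main_algo} is not needed for this application: only $D,E,P,D_T$ and $Y$ are used.)

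The query algorithms are then as follows. For 1(a): if $(v_i,v_j)$ is not a tree edge — decided in $O(1)$ time by the parent query 2(a) — answer ``no''; otherwise read the bit of $Y$ aligned with the position of that tree edge, obtained from the same $select_1$ computation that realises the parent query, in $O(1)$ time. For 3: iterate through the ones of $Y$ with $select_1$ and recover each bridge as the tree edge determined by its lower endpoint together with its parent (query 2(a)), spending $O(1)$ per bridge, hence optimal $O(t)$ time. For 1(b): if $(v_i,v_j)$ is itself a bridge, output just that edge; otherwise climb from $v_i$ up the DFS tree, never crossing a bridge, to the top vertex of the piece $S$ containing $v_i$, then run a DFS from that vertex restricted so that it never crosses a bridge (recognised through $Y$). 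At each visited vertex $w$ I also scan $w$'s adjacency array in $G$ and emit $w$'s incident back edges, reporting each such edge only from its lower endpoint — detected by an ``on the current DFS stack'' flag maintained during the restricted DFS and reset afterwards — so that every edge of the component is output exactly once. This touches only the edges of that one component plus the $O(|S|)$ boundary bridges, so it runs in time proportional to the size of the solution. For 2: generate the component $C_i$ of $e_i$ exactly as in 1(b) and test whether $e_j$ appears in it, at cost $O(|C_i|)\le O(\max(|C_i|,|C_j|))$.

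The step I expect to be the main obstacle is making 1(b) and 2 genuinely output-sensitive. A $2$-edge-connected component may span many minitrees, so the $O(\lg n)$-per-edge reconstruction of Lemma~\ref{reconstruct} must not be invoked per edge; instead the back edges of the component have to be enumerated without ever inspecting an edge outside it and without double counting, and the resolution is precisely the bridge-bounded tree DFS with on-stack deduplication sketched above, justified by the observation that a back edge incident to a bridge-free piece has both endpoints inside that piece. The remaining verifications — correctness of the bridge algorithm on parallel edges and disconnected graphs, and the bit bookkeeping linking $D,E,P$ and $Y$ — are routine.
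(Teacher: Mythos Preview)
Your proposal is correct and follows essentially the same route as the paper: mark the bridge tree-edges in a bitvector $Y$ aligned with $D$, $E$, $P$; answer 1(a) and 3 via \textit{select} on $Y$; and answer 1(b) and 2 by a bridge-avoiding DFS that peels off the relevant $2$-edge-connected component, testing membership of $e_j$ for query~2. You supply more implementation detail than the paper (back-edge deduplication via the on-stack flag, the parallel-edge caveat, the observation that the tree-cover machinery is unneeded here), but the underlying argument is identical.
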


\end{document}